\newtheorem{proposition}{Proposition}
\newtheorem{theorem}{Theorem}
\newtheorem{lemma}{Lemma}
\newtheorem{definition}{Definition}
\newtheorem{corollary}{Corollary}
\newcommand\ind[1]{\mathbbm{1}_{\{#1\}}}
\def\cal{\mathcal}
\def\Q{{\mathbb Q}}
\def\N{{\mathbb N}}
\def\R{{\mathbb R}}
\def\P{{\mathbb P}}
\def\E{{\mathbb E}}
\def\Var{\mathrm{Var}}
\def\etal{{\em et al.}}
\def\cvj{\xrightarrow[j\rightarrow +\infty]{}}
\title[An identification problem in an urn and ball model]{An identification problem in an
  urn and ball model with heavy tailed distributions} 
\date{}
\author{Christine Fricker}
\address[C. Fricker]{INRIA-Rocquencourt,  RAP project, Domaine de Voluceau, 78153 Le Chesnay, France}
\email{Christine.Fricker@inria.fr}
\urladdr{http://www-c.inria.fr/twiki/bin/view/RAP/ChristineFricker}
\author{Fabrice  Guillemin}
\address[F.~Guillemin]{Orange Labs, F-22300 Lannion}
\email{Fabrice.Guillemin@orange-ftgroup.com}
\author{Philippe Robert}
\address[Ph.~Robert]{INRIA-Rocquencourt,  RAP project, Domaine de Voluceau, 78153 Le Chesnay, France}
\email{Philippe.Robert@inria.fr}
\urladdr{http://www-rocq.inria.fr/\~{}robert}
\begin{document}

\begin{abstract}
We consider in this  paper an urn and ball problem with  replacement, where balls are with
different colors and  are drawn uniformly from a  unique urn. The numbers of  balls with a
given color are i.i.d. random variables  with a heavy tailed probability distribution, for
instance a  Pareto or a  Weibull distribution. We  draw a small  fraction $p\ll 1$  of the
total number  of balls.  The  basic problem addressed  in this paper  is to know  to which
extent we can infer the total number of colors and the distribution of the number of balls
with a given color. By means of  Le Cam's inequality and Chen-Stein method, bounds for the
total variation norm  between the distribution of  the number of balls drawn  with a given
color and the Poisson distribution with the  same mean are obtained. We then show that the
distribution of the number of balls drawn with  a given color has the same tail as that of
the  original number  of  balls. We  finally  establish explicit  bounds  between the  two
distributions when each ball is drawn with fixed probability $p$.
\end{abstract}

\keywords{Chen-Stein method, Pareto distribution, Weibull distribution.}

\maketitle

\section{Introduction}
We consider in this paper the following urn and ball scheme \emph{with replacement}: An
urn contains a random number of balls with different colors. We draw a small fraction
$p\ll 1$ of the total number of balls. A ball which has been drawn is 
replaced into the urn. The problem considered in this paper consists of estimating the
number of colors together with  the distribution of the number of balls with a given color by using
information from  sampled balls. This problem is motivated by the analysis of packet
sampling in the Internet (see Chabchoub \etal~\cite{Chabchoub:01} for details). 

To address the above problem, we
analyze the non-normalized distribution of the number of balls drawn with a given
color. More specifically, let $W_j$ (respectively,  $W_j^+$) denote the number of colors
with a number of sampled balls equal to (respectively, equal to or greater than)
$j$. Denoting by $\tilde{K}$ the number of colors seen when drawing balls, the quantities
$W_j/\tilde{K}$ and $W_j^+/\tilde{K}$ are equal to the proportions of colors, which at the
end of the trial comprise exactly or at least $j$ balls, respectively.  

The numbers of balls with various colors are assumed to be i.i.d. random variables and the
number $K$ of colors is  large. In addition, the distribution of the number of balls  with a given
color  has a heavy  tailed probability distribution  of Pareto or Weibull type. Finally, balls are
drawn uniformly. This means that for each $i=1,\ldots, K$, if there  are $v_i$ balls with
color  $i$,  the probability  of  drawing  a  ball with  this  color  is  $v_i/V$,  where
$V=v_1+\cdots+V_K$ is the total number of balls in the urn.  

The above model is defined as the ``uniform model''. It will  compared  against  the case when
balls are drawn independently one of each other with probability $p$. This model will be 
referred to as probabilistic model. We show that the results obtained in both models are
close one to each other when $p$ is very small. But there are some subtle differences
between the two models, notably with regard to the achievable accuracy in the inference of
original statistics. It turns out that the probabilistic model  is simpler to analyze than
the uniform model but yields less accurate results. This is due to the fact that we
cannot exploit the fact that the number of colors is very large. 

One of the main results of the paper concerns the analysis of the validity of the following
simple scaling  rule: The  distribution of the  original number
$v_i$  of balls  with color  $i$ could be estimated by  that of  the random  variable  $\tilde{v}_i/p$, where
$\tilde{v}_i$ is the number of sampled  balls with color $i$.  When each ball is drawn with a
fixed probability, it is  known that this rule is valid for  tails of the distributions as
soon as  they are heavy tailed. See  Asmussen \emph{et al}~\cite{Asmussen-2}  and Foss and
Korshunov~\cite{Foss-12}  where this asymptotic equivalence  is proved in  a quite general
framework. Our main goal  here  is to get,  for $j\geq 2$,  an {\em  explicit bound} on the quantity
\[
\left| \frac{\P(\tilde{v}\geq j)}{\P(v\geq j/p)}-1\right|.
\]
In the context of packet sampling in the Internet, explicit expressions are especially important for  the estimation of the sizes of flows in Internet traffic. In this setting
the variable $j$ is taken to be large but cannot be too large so that the event $\{\tilde{v} =j\}$ occurs
sufficiently often to obtain  reliable statistics. Henceforth, the dependence on $j$ should be
made explicit. See Chabchoub \etal~\cite{Chabchoub:01} for a discussion. 

The organization of this paper is as follows: The notation and the basic results used in
this paper (Le Cam's inequality and Chen-Stein method) are presented in
Section~\ref{notation}. The mean values of the random variables $W_j$ and $W_j^+$ are
computed in Section~\ref{means}. The approximation of the distribution of $W_j^+$ by a
Poisson distribution and the validity of the scaling rule are investigated in
Section~\ref{distrib}. We compare  in Section~\ref{comparison} the original distribution
of the number of balls  with a given color against the rescaled distribution of the number
of drawn balls with the same color. Some concluding remarks with regard to sampling are
presented in Section~\ref{conclusion}.

\section{Notation and basic results}
\label{notation}
\subsection{Definitions and assumptions}
We consider an urn containing $v_i$ balls with color $i$ for $i=1,\dots, K$. The quantities $v_i$ are independent random variables with a common heavy tailed distribution. In the following we shall consider two families of heavy tailed distributions for the number $v$ of balls with a given color:
\begin{description}
\item[Pareto distributions] The distribution of $v$ is given by
\begin{equation}
\label{pareto}
\P(v >  x) = (b/x)^a, \quad x \geq b,
\end{equation}
with the shape parameter $a >1$ and the location parameter $b >0$. The mean of $v$ is $ab/(a-1)$.
\item[Weibull distributions] The distribution of $v_i$ is given by
\begin{equation}
\label{weibull}
\P(v >  x) = \exp(-(x/\eta)^\beta), \quad x \geq 0,
\end{equation}
with the skew  parameter $\beta\in (0,1)$ and the scale parameter $\eta>0$. The mean of $v$ is $\frac{\eta}{\beta}\Gamma(1/\beta)$, where $\Gamma$ is the classical Euler's Gamma function.
\end{description} 

The total number of balls in the urn is $V= \sum_{i=1}^K v_i$. We draw only a fraction $p$ of this total number of balls. Each ball is drawn at random: A ball with color $i$ is drawn  with probability  $v_i/V$. After drawing the $pV$ balls, we have $\tilde{v}_i$ balls with color $i$. Of course, only those colors with $\tilde{v}_i>0$ can be seen. The quantity $\tilde{K} = \sum_{i=1}^K \ind{\tilde{v}_i>0}$ is the number of colors seen at the end of a trial.

In the following, we shall be interested in the asymptotic regime when the number of  colors $K \to \infty$ while the fraction  $p\to 0$. Note that by the law of large numbers,  $V \to \infty $ a.s. (the total number of balls in the urn is very large).   

The random variables we consider in this paper to infer the original statistics of the number of balls and colors are the 
variables $W_j$ and $W_j^+$, $j  \geq 0$,  defined as follows.

\begin{definition}[Definition of $W_j$] The random variable  $W_j$ is  the number of colors with  $j$ balls at the end of a trial and is given by
$$
j \geq 0, \quad W_j=\ind{\tilde{v}_1=j}+\ind{\tilde{v}_2=j}+\cdots+\ind{\tilde{v}_K=j},
$$
where $\tilde{v}_i \geq 0$ is the number of balls drawn with color $i$ (which can be equal to 0). 
\end{definition}

\begin{definition}[Definition of $W_j^+$] The random variable  $W_j^+$ is  the number of colors with at least $j$ balls at the end of a trial.   The random variables $W_j^+$ are formally defined by
$$
j \geq 0, \quad W^+_j=\ind{\tilde{v}_1\geq j}+\ind{\tilde{v}_2\geq j}+\cdots+\ind{\tilde{v}_K\geq j}.
$$
\end{definition}

Note that we have
$$
\forall j \geq 0, \quad W_j^+ = \sum_{\ell \geq j} W_\ell.
$$

The averages of the random variables $W_j$ are in fact the key quantities we shall use in the following to infer the original numbers of balls per color.

\subsection{Le Cam's inequality and Chen-Stein method}

Le Cam's inequality gives the distance in total variation between the distribution of a sum of independent and identically distributed (i.i.d.) Bernoulli random variables and the Poisson distribution with the same mean (see Barbour \etal~\cite{Barbour}). Note that if $V$ and $W$ are two  random variables taking integer values, the distance in total variation between their distributions is defined by
\begin{eqnarray*}
\|\P(W\in \cdot)-\P(V \in\cdot)\|_{tv}  &\stackrel{\text{def.}}{=}&   \sup_{A\subset \N} \left|\P(W\in A)-\P(V\in A)\right| \\
&=& \frac{1}{2}\sum_{n\geq 0} \left|\P(W=n)-\P(V=n)\right|.
\end{eqnarray*}

\begin{theorem}[Le Cam's Inequality]
If the random variable $W=\sum_{i}I_i$, where the random variables $I_i$ are i.i.d. Bernoulli random variables, then
\begin{equation}\label{lecam}
\|\P(W\in \cdot)-\P(Q_{\E(W)}\in\cdot)\|_{tv}\leq \sum_{i} \P(I_i=1)^2,
\end{equation}
where for $\lambda >0$, $Q_\lambda$ is a Poisson random variable with mean $\lambda$, that is, for all $n \geq 0$, $$\P(Q_{\lambda} = n) = \frac{\lambda^n}{n!} e^{-\lambda}.$$
\end{theorem}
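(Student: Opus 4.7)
The plan is to prove Le Cam's inequality by the standard coupling approach, which gives a clean bound with the right constants. The main tool is the classical coupling characterization of total variation distance: for any pair of integer-valued random variables $Y,Z$ defined on the same probability space,
\[
\|\P(Y\in\cdot)-\P(Z\in\cdot)\|_{tv} \leq \P(Y\neq Z),
\]
and more generally the infimum of the right-hand side over all couplings equals the left-hand side.

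First I would handle a single summand. Let $p_i=\P(I_i=1)$. I would construct, on an enlarged probability space, a Poisson random variable $X_i$ with parameter $p_i$ together with $I_i$ such that $\P(I_i\neq X_i)$ is as small as possible. The explicit coupling goes as follows: draw a uniform variable $U_i$ on $[0,1]$ and let $I_i=\ind{U_i\leq p_i}$; then set $X_i=0$ if $U_i>1-e^{-p_i}$, $X_i=1$ if $U_i\in(1-e^{-p_i}-p_ie^{-p_i},1-e^{-p_i}]$, and so on, so that $X_i$ has the right Poisson marginal and agrees with $I_i$ whenever possible. A direct calculation of the marginals shows $\P(I_i\neq X_i)=p_i(1-e^{-p_i})\leq p_i^2$, using $1-e^{-x}\leq x$ for $x\geq 0$.

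Next I would do this independently for every index $i$, so that the pairs $(I_i,X_i)$ are mutually independent. Setting $W=\sum_i I_i$ and $Q=\sum_i X_i$, the sum of independent Poisson variables is again Poisson with parameter $\sum_i p_i=\E(W)$, so $Q$ has the distribution of $Q_{\E(W)}$. Moreover $\{W\neq Q\}\subseteq \bigcup_i\{I_i\neq X_i\}$, hence by the union bound
\[
\P(W\neq Q)\leq \sum_i \P(I_i\neq X_i)\leq \sum_i p_i^2.
\]
Combining this with the coupling inequality for the total variation distance applied to $W$ and $Q$ yields~\eqref{lecam}.

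The only slightly delicate step is the single-variable calculation that the optimal coupling achieves $\P(I_i\neq X_i)=p_i(1-e^{-p_i})$; everything else (independence, the sum-of-Poissons property, the union bound) is routine. The inequality $p_i(1-e^{-p_i})\leq p_i^2$ is what produces the quadratic terms on the right-hand side of~\eqref{lecam}, and it is the essential place where the Poisson approximation becomes quantitatively sharp when the $p_i$ are small.
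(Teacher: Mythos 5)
The paper does not prove this theorem; it simply states it and cites Barbour, Holst, and Janson~\cite{Barbour} for the result, then uses it as a black box in the proof of Proposition~\ref{boundobs}. Your coupling argument is the standard proof of Le~Cam's inequality and is correct: the maximal coupling of a Bernoulli$(p_i)$ with a Poisson$(p_i)$ achieves $\P(I_i\neq X_i)=p_i(1-e^{-p_i})\leq p_i^2$ (which indeed equals the total variation distance between these two marginals), the independent sum of Poissons is Poisson with parameter $\E(W)$, and the union bound together with the coupling bound on total variation closes the argument. Note that your proof in fact establishes the more general version for independent (not necessarily identically distributed) Bernoulli variables, which is what the form of the right-hand side $\sum_i\P(I_i=1)^2$ suggests and what Barbour et al.\ prove; the i.i.d.\ hypothesis in the paper's statement is stronger than needed, and the i.i.d.\ special case is all that is actually used later (conditionally on $\mathcal{F}$, the $B_\ell^i$ all have success probability $v_i/V$).
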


When the random variables $I_i$ appearing in the above theorem are not independent but satisfy a specific condition, referred to as monotonic coupling, it is still possible to obtain a bound on the distance between the distribution of the sum $W = \sum_i I_i$ and the Poisson distribution with mean $\E(W)$. 

\begin{definition}[Monotonic Coupling]
The variables $I_i$ are said to be {\em negatively related}, when there exist some random
variables $U_i$ and $V_i$ such that
\begin{enumerate}
\item $U_i\stackrel{\text{dist.}}{=} W$ and
  $1+V_i\stackrel{\text{dist.}}{=} (W\mid I_i=1)$; 
\item $V_i\leq U_i$.
\end{enumerate}
\end{definition}

The main result of the Chen-Stein method is given by  the following theorem (see
Barbour \etal~\cite{Barbour}). 

\begin{theorem}\label{theobar}
If {\em the monotonic coupling condition} is satisfied, then the following inequality holds
\begin{equation}\label{varE}
\|\P(W\in \cdot)-\P(Q_{\E(W)}\in\cdot)\|_{tv}\leq 1-\frac{\Var(W)}{\E(W)}.
\end{equation}
\end{theorem}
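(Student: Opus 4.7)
The plan is to deploy the standard Chen--Stein machinery together with the negative correlation structure given by the coupling. Write $\lambda=\E(W)$ and $p_i=\P(I_i=1)$, so that $\lambda=\sum_i p_i$.

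First I would invoke the Stein equation for the Poisson law. For any $A\subset\N$, there is a bounded function $g_A:\N\to\R$ satisfying
\begin{equation*}
\lambda g_A(k+1)-k g_A(k)=\ind{k\in A}-\P(Q_\lambda\in A),\qquad k\geq 0,
\end{equation*}
together with the classical Stein bound $\sup_{k\geq 0}|g_A(k+1)-g_A(k)|\leq 1/\lambda$ (see Barbour \etal~\cite{Barbour}). Substituting $k=W$ and taking expectations yields
\begin{equation*}
\P(W\in A)-\P(Q_\lambda\in A)=\lambda\,\E[g_A(W+1)]-\E[Wg_A(W)].
\end{equation*}

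Next I would rewrite each term using the coupling. For the second term, decompose $W=\sum_i I_i$ and condition on $\{I_i=1\}$: since $1+V_i\stackrel{\text{dist.}}{=}(W\mid I_i=1)$,
\begin{equation*}
\E[Wg_A(W)]=\sum_i p_i\,\E[g_A(V_i+1)].
\end{equation*}
For the first term, since $U_i\stackrel{\text{dist.}}{=}W$,
\begin{equation*}
\lambda\E[g_A(W+1)]=\sum_i p_i\,\E[g_A(U_i+1)].
\end{equation*}
Subtracting, using the telescoping identity $g_A(U_i+1)-g_A(V_i+1)=\sum_{k=V_i+1}^{U_i}(g_A(k+1)-g_A(k))$ together with the hypothesis $V_i\leq U_i$ and the Stein bound, I obtain
\begin{equation*}
\bigl|\P(W\in A)-\P(Q_\lambda\in A)\bigr|\leq \frac{1}{\lambda}\sum_i p_i\,\E[U_i-V_i].
\end{equation*}

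The last step is a moment identity. From $\E[U_i]=\lambda$ and $\E[V_i]=\E[W\mid I_i=1]-1$,
\begin{equation*}
\sum_i p_i\,\E[U_i-V_i]=\sum_i p_i(\lambda+1)-\sum_i \E[W\ind{I_i=1}]=\lambda(\lambda+1)-\E[W^2],
\end{equation*}
which, using $\E[W^2]=\Var(W)+\lambda^2$, equals $\lambda-\Var(W)$. Dividing by $\lambda$ and taking the supremum over $A$ gives the announced inequality. The only genuinely delicate point is the Stein bound on the difference $g_A(k+1)-g_A(k)$; this is a classical estimate that I would simply quote from \cite{Barbour}. The rest is bookkeeping with the coupling and the identity $\sum_i p_i\E[W\mid I_i=1]=\E[W^2]$.
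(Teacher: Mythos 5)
Your proof is correct, and it is the standard Chen--Stein argument for negatively related indicators (essentially Corollary~2.C.2 of Barbour, Holst and Janson, which the paper cites without reproducing the proof). You set up the Stein equation, use the coupling $U_i\stackrel{\text{dist.}}{=}W$ and $1+V_i\stackrel{\text{dist.}}{=}(W\mid I_i=1)$ to re-express $\lambda\E[g_A(W+1)]-\E[Wg_A(W)]$ as $\sum_i p_i\E[g_A(U_i+1)-g_A(V_i+1)]$, exploit $V_i\leq U_i$ to telescope, and then the moment identity $\sum_i p_i\E[W\mid I_i=1]=\E[W^2]$ closes the argument. One cosmetic remark: the sharp Stein-solution estimate is $\sup_k|g_A(k+1)-g_A(k)|\leq(1-e^{-\lambda})/\lambda$; you used the slightly weaker $1/\lambda$, which is exactly what is needed for the stated bound $1-\Var(W)/\E(W)$ (the sharper constant would give the extra factor $1-e^{-\lambda}$). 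No gaps.
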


When  the  monotonic coupling  condition  is  satisfied, in  order  to  prove the  Poisson
approximation, it is sufficient to show that the ratio of the variance to the mean
value of $W$ is close to $1$; this is a very weak condition to prove in practice.

It  should be noted  (see \cite{Robert-5}) that Relation~\eqref{varE}  can be  used not  only when  $\E(W)$ takes
bounded values  so that $W$ is  approximately a  Poisson random variable, but  also when
$\E(W)$ is large. In this case Chen-Stein Method yields a central limit theorem: If ${\cal
N}$ is a standard normal distribution,
\begin{multline*}
\left\|\P\left(\frac{W-\E(W)}{\sqrt{\Var(W)}}\in\cdot \right)-\P({\cal N}\in\cdot)\right\|_{tv}
\leq\\
 \left\|\P\left(\frac{W{-}\E(W)}{\sqrt{\Var(W)}}\in\cdot \right){-}\P\left(\frac{Q_{\E(W)}{-}\E(W)}{\sqrt{\Var(W)}}\in\cdot \right)\right\|_{tv}
\\{+}\left\|\P\left(\frac{Q_{\E(W)}{-}\E(W)}{\sqrt{\Var(W)}}\in\cdot \right){-}\P({\cal N}\in\cdot)\right\|_{tv}{}
\end{multline*}
where $\Var(W)$ is the variance of the random variable $W$.

By using Relation~\eqref{varE}, we have
\begin{multline*}
\left\|\P\left(\frac{W-\E(W)}{\sqrt{\Var(W)}}\in\cdot \right)-\P({\cal N}\in\cdot)\right\|_{tv}
\leq 1-\frac{\Var(W)}{\E(W)}
\\+\left\|\P\left(\frac{Q_{\E(W)}{-}\E(W)}{\sqrt{\Var(W)}}\in\cdot \right){-}\P({\cal N}\in\cdot)\right\|_{tv}.
\end{multline*}
If the ratio $\E(W)/\Var(W)$  is close to $1$, then the first term  in the right hand side
of the above relation is negligible.  In addition, the classical central limit theorem for
Poisson  distributions   implies  that  when  $\E(W)$   is  large,  the   second  term  is
negligible too. Therefore, we  have $W\sim \E(W)+\sqrt{\Var(W)} {\cal N}$ with  a bound on the
error.

\section{Computation of mean values}
\label{means}
\subsection{Bounds for mean values}

By using Le Cam's inequality, we can establish the following result for the mean value of the random variables $W_j$. 

\begin{proposition}[Mean Value of $W_j$]\label{boundobs}
If there are $V$ balls and   $K$ colors in the urn, for $j \geq 0$, the mean number $\E(W_j)$ of colors with $j$ balls at the end of a trial  satisfies the relation 
\begin{align}\label{asymp}
\left|\frac{\E(W_j)}{K}-\Q_{j}\right|\leq  \E\left(\min(pv,1) \frac{v}{V}\right),
\end{align}
where $\Q$  is the probability distribution defined for $ j \geq 0$ by 
\begin{align*}
 \Q_{j}=\E\left(\frac{{\left(p v\right)}^j}{j!}e^{-{p}v}\right),
\end{align*}
$p$ is the sampling rate, and $v$ is distributed as the number of balls with a given color.
\end{proposition}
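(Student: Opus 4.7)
The plan is to exploit the exchangeability of the colors to reduce the problem to a single-color Poisson approximation, and then bound the resulting total variation distance by a conditional Le Cam / Chen--Stein argument. Since the $v_i$ are i.i.d., symmetry gives
\[
\E(W_j)=\sum_{i=1}^K\P(\tilde v_i=j)=K\,\P(\tilde v_1=j),
\]
so $\E(W_j)/K-\Q_j=\E\bigl[\P(\tilde v_1=j\mid\mathbf v)-\P(Q_{pv_1}=j\mid v_1)\bigr]$ with $\mathbf v=(v_1,\ldots,v_K)$. Conditionally on $\mathbf v$, each of the $pV$ drawn balls is of color $1$ independently with probability $v_1/V$, so $\tilde v_1\mid\mathbf v$ is the sum of $pV$ i.i.d.\ Bernoulli variables with parameter $v_1/V$ and mean $pv_1$, matching the intensity of $Q_{pv_1}$.

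The key estimate is thus the total variation distance between $\tilde v_1\mid\mathbf v$ and $Q_{pv_1}$. The bare Le Cam inequality~\eqref{lecam} gives $pV(v_1/V)^2=pv_1\cdot v_1/V$, which already matches the target whenever $pv_1\le 1$. To cover the complementary regime $pv_1>1$, I would invoke the Chen--Stein refinement of Le Cam (Barbour et al.~\cite{Barbour}), which inserts the Stein ``magic factor'' $(1-e^{-\lambda})/\lambda$ with $\lambda=pv_1$ and yields
\[
\|\P(\tilde v_1\in\cdot\mid\mathbf v)-\P(Q_{pv_1}\in\cdot)\|_{tv}\leq \bigl(1-e^{-pv_1}\bigr)\frac{v_1}{V}\leq \min(pv_1,1)\,\frac{v_1}{V},
\]
using $1-e^{-x}\leq\min(1,x)$.

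Combining the triangle inequality $|\E[\cdot]|\leq\E|\cdot|$ with the pointwise consequence of the previous display then produces
\[
\bigl|\E(W_j)/K-\Q_j\bigr|\leq \E\!\left(\min(pv,1)\frac{v}{V}\right),
\]
where $v\stackrel{\mathrm d}{=}v_1$. The main obstacle is precisely the Chen--Stein refinement used above: without the magic factor, Le Cam alone only yields $\E(pv\cdot v/V)=p\E(v^2/V)$, which is strictly weaker and, under heavy-tailed $v$, can be much larger (or even diverge) in the regime where $pv>1$ has positive probability. Once the sharpened binomial-to-Poisson estimate is in hand, everything else is routine conditioning.
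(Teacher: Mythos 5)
Your proof is correct and follows essentially the same route as the paper: condition on the environment $\mathcal{F}=\{v_1,\ldots,v_K\}$, view $\tilde v_i$ as a sum of $pV$ independent Bernoulli$(v_i/V)$ variables, bound its distance to $Q_{pv_i}$ by $\min(pv_i,1)\,v_i/V$, and decondition. The only cosmetic difference is that you obtain the $\min$ directly from the magic-factor form $(1-e^{-\lambda})/\lambda\sum p_i^2\leq(1-e^{-pv_1})v_1/V$, whereas the paper combines the bare Le Cam bound $pv_i^2/V$ with the variance-ratio bound $1-\Var/\E = v_i/V$ from their Theorem~\ref{theobar}; both yield the identical estimate.
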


\begin{proof}
We have
\[
\tilde{v}_i=  B_{1}^i+B_{2}^i+\cdots+B_{p V}^i,
\]
where $B_\ell^i$ is equal to one if the  $\ell$th ball drawn from the urn has color $i$, which event occurs with probability $v_i/V$, the quantity $V$ being the total number of balls in the urn. 

Conditionally on the values of the set ${\cal F}=\{v_1, \ldots, v_K\}$, the variables $(B_\ell^i, \ell \geq 1)$ are
independent Bernoulli variables. 
For $1\leq i\leq K$, Le Cam's Inequality~\eqref{lecam} therefore gives  the
relation 
\[
\left\|\P(\tilde{v}_i\in\cdot\mid {\cal F})-\P(Q_{p {v}_i}\in \cdot)  \right\|_{tv}
\leq p\frac{v_i^2}{V},
\]
and Relation~\eqref{varE} which can also be used in this case yields
\[
\left\|\P(\tilde{v}_i\in\cdot\mid {\cal F})-\P(Q_{p {v}_i}\in \cdot)  \right\|_{tv}
\leq \frac{v_i}{V},
\]
By integrating with respect to the variables $v_1, \ldots, v_K$,  these two inequalities
give  the relation
\begin{equation}
\label{distWj}
\left\|\P(\tilde{v}_i\in\cdot)-\Q\right\|_{tv} \leq
\E\left(\min\left(pv,1\right)\frac{v}{V}\right).
\end{equation}
Since $\E(W_j)=\sum_{i=1}^K\P(\tilde{v}_i=j)$,
by summing on $i=1, \ldots, K$,  we obtain
\[
\left|\E(W_j)-K\Q_j\right|\leq  K \E\left(\min\left(pv,1\right)\frac{v}{V}\right).
\]
and the result follows.
\end{proof}

By using the fact that $\E(W_j^+) = \sum_{i=1}^K \P(\tilde{v}_i \geq j)$, we can deduce
from Equation~\eqref{distWj} the following result. 

\begin{proposition}[Mean Value of $W_j^+$]\label{boundobs+}
If there are $V$ balls and  $K$ colors in the urn, the mean number $\E(W^+_j)$ of colors with at least $j \geq 0$ balls at the end of an arbitrary trial  satisfies the relation 
\begin{align}\label{asymp+}
\left|\frac{\E(W^+_j)}{K}-\sum_{\ell \geq j} \Q_{\ell}\right|\leq \E\left(\min\left(pv,1\right)\frac{v}{V}\right),
\end{align}
where the probability distribution $\Q$ is defined in Proposition~\ref{boundobs}.
\end{proposition}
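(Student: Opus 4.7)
The plan is to reuse the bound on the total variation distance between $\P(\tilde v_i\in\cdot)$ and the mixed-Poisson distribution $\Q$ that was already established in the proof of Proposition~\ref{boundobs}, namely Equation~\eqref{distWj}:
\[
\left\|\P(\tilde v_i\in\cdot)-\Q\right\|_{tv}\leq \E\left(\min(pv,1)\frac{v}{V}\right).
\]
Everything then follows by applying this inequality to a single well-chosen event, summing over colors, and normalizing by $K$.

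First, I would recall the variational characterization of total variation, $\|\mu-\nu\|_{tv}=\sup_{A\subset\N}|\mu(A)-\nu(A)|$, and specialize it to the event $A=\{j,j+1,j+2,\dots\}$. This choice gives, for each $1\leq i\leq K$,
\[
\left|\P(\tilde v_i\geq j)-\sum_{\ell\geq j}\Q_\ell\right|\leq \left\|\P(\tilde v_i\in\cdot)-\Q\right\|_{tv}\leq \E\left(\min(pv,1)\frac{v}{V}\right),
\]
the last inequality being precisely \eqref{distWj}. Note that the right-hand side no longer depends on the index $i$, because after integration over ${\cal F}$ the bound becomes uniform in $i$.

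Next, I would write $\E(W_j^+)=\sum_{i=1}^K\P(\tilde v_i\geq j)$, which follows from the definition of $W_j^+$ by linearity of expectation and the fact that $\P(\tilde v_i\geq j)=\E(\ind{\tilde v_i\geq j})$. Applying the triangle inequality to the sum over $i$ of the bounds obtained in the previous step yields
\[
\left|\E(W_j^+)-K\sum_{\ell\geq j}\Q_\ell\right|\leq K\,\E\left(\min(pv,1)\frac{v}{V}\right),
\]
and dividing by $K$ gives \eqref{asymp+}.

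There is no real obstacle here: the whole content has been packaged into \eqref{distWj}, and the present proposition is merely the statement obtained by summing the tail event $\{\ell\geq j\}$ instead of the singleton $\{j\}$. The only point worth being careful about is that the bound from Le Cam's inequality and from Chen-Stein (both taken inside the $\min$) applies to the total variation distance, hence automatically to every event and in particular to the upper tail, which is what makes the argument work without any additional computation.
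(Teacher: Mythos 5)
Your proposal is correct and matches the paper's own reasoning exactly: the paper derives Proposition~\ref{boundobs+} directly from Equation~\eqref{distWj} via $\E(W_j^+)=\sum_{i=1}^K\P(\tilde v_i\geq j)$, and you simply spell out the intermediate step of specializing the total-variation supremum to the tail event $\{j,j+1,\dots\}$ before summing over $i$.
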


We immediately deduce from Propositions~\ref{boundobs} and \ref{boundobs+} the following
corollary by using the fact that $V\geq K$. 

\begin{corollary}[Asymptotic Mean Values] \label{corasymp}
The relations 
$$
\lim_{K\to \infty}\frac{1}{K}\E(W_j)= \Q_j \quad \mbox{and} \quad \lim_{K\to \infty}\frac{1}{K}\E(W^+_j)= \sum_{\ell \geq j}\Q_\ell .
$$
hold.
\end{corollary}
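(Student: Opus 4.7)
The plan is to observe that Propositions~\ref{boundobs} and \ref{boundobs+} both bound the deviation of $\E(W_j)/K$ and $\E(W_j^+)/K$ from their putative limits by \emph{the same} error term $\E(\min(pv,1)v/V)$, which is independent of $j$. It therefore suffices to prove that this quantity tends to zero as $K\to\infty$.

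For this step I would follow the hint and use $V\geq K$: since the total number of balls $V=v_1+\cdots+v_K$ has each summand at least $1$ (balls being counted, and, for Pareto, $v_i\geq b$), one immediately obtains
\[
\E\left(\min(pv,1)\,\frac{v}{V}\right)\leq \E\left(\frac{v}{V}\right)\leq \frac{\E(v)}{K}.
\]
Since both the Pareto distribution (with $a>1$, mean $ab/(a-1)$) and the Weibull distribution (mean $(\eta/\beta)\Gamma(1/\beta)$) have finite mean, the right-hand side goes to $0$ as $K\to\infty$. Plugging this into~\eqref{asymp} and~\eqref{asymp+} yields the two claimed limits simultaneously.

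An alternative argument, which bypasses the assumption $V\geq K$, exploits the deterministic identity $\sum_{i=1}^K v_i/V=1$: taking expectations and using the i.i.d.\ symmetry of the $v_i$'s gives $\E(v/V)=1/K$ \emph{exactly}, and since $\min(pv,1)\leq 1$ one concludes $\E(\min(pv,1)v/V)\leq 1/K$ directly. Either way the bound is $O(1/K)$.

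There is essentially no obstacle here: the corollary is a one-line consequence of the two propositions once the error bound is shown to vanish. It is worth noting, however, that because the bound is uniform in $j$, the convergence is in fact uniform: $\sup_{j\geq 0}|\E(W_j)/K-\Q_j|\to 0$ and analogously for $W_j^+$.
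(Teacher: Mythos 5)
Your proof is correct and follows the same route the paper indicates: the error term $\E(\min(pv,1)\,v/V)$ from Propositions~\ref{boundobs} and~\ref{boundobs+} is independent of $j$, and the paper's hint ``$V\geq K$'' is exactly what you use to show it is $O(1/K)$. Your alternative observation that $\E(v/V)=1/K$ exactly (by symmetry and $\sum_i v_i/V=1$, with $v$ understood as one of the $v_i$'s, as the proof of Proposition~\ref{boundobs} makes clear) is a cleaner bound that dispenses with both $V\geq K$ and finiteness of $\E(v)$, and your remark that the convergence is uniform in $j$ is an accurate, if minor, strengthening.
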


Note that if balls are  drawn with probability $p$ independently one of each other (probabilistic model), we have $\tilde{v}_i = \sum_{\ell=1}^{v_i} \tilde{B}^i_\ell$, where the random variables $\tilde{B}^i_\ell$ are Bernoulli with mean $p$. By adapting the above proofs, we find
\begin{equation}
\label{Wjproba}
\left|\frac{\E(W_j)}{K}- \Q_{j}\right|\leq p.
\end{equation}

\subsection{Asymptotic results for specific probability distributions}

\subsubsection{Pareto distributions}
Let us first assume that the number of balls of a given color follows a Pareto distribution given by Equation~\eqref{pareto}. Then, we have the following result when the number of colors goes to infinity.

\begin{proposition}
\label{propalpareto}
If $v$ has a Pareto distribution as in Equation~\eqref{pareto}, then for all
$j> a$, the relations
\begin{align}
\lim_{K\to +\infty} \frac{\E(W_{j+1})}{\E(W_{j})} & =  1-\frac{a+1}{j+1}+O((pb)^{j-a}), \label{equiv} \\
\lim_{K\to +\infty} \frac{\E(W_j)}{K}& = a(pb)^{a} \frac{\Gamma(j-a)}{j!}+O((pb)^{j}), \label{Kel} \\
 \lim_{K\to +\infty} \frac{\E(W^+_j)}{K} &= (p b)^{a} \frac{\Gamma(j-a)}{(j-1)!}+O\left(\frac{(pb)^j}{1-pb}\right) \label{Kelbis}
\end{align}hold.
\end{proposition}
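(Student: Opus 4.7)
The plan is to invoke Corollary~\ref{corasymp}, which reduces the problem to an asymptotic analysis of $\Q_j$ and $\sum_{\ell\geq j}\Q_\ell$ when $v$ has the Pareto density $f(x)=ab^a/x^{a+1}$ on $[b,\infty)$ and $pb$ is small. The key calculation is a single substitution: writing
\[
\Q_j=\int_b^\infty \frac{(px)^j}{j!}e^{-px}\,\frac{ab^a}{x^{a+1}}\,dx
\]
and setting $u=px$ turns this into
\[
\Q_j=\frac{a(pb)^a}{j!}\int_{pb}^\infty u^{j-a-1}e^{-u}\,du.
\]
Since $j>a$, the integral converges to $\Gamma(j-a)$ as $pb\to 0$, and the correction is $-\int_0^{pb}u^{j-a-1}e^{-u}\,du=O((pb)^{j-a})$ by expanding $e^{-u}$. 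This immediately yields~\eqref{Kel}, and the proof of~\eqref{equiv} is then a direct algebraic manipulation: the ratio of the leading terms is $\Gamma(j+1-a)\,j!/(\Gamma(j-a)(j+1)!)=(j-a)/(j+1)=1-(a+1)/(j+1)$, and the relative error coming from the next-order correction to each $\Q_j$ is of order $(pb)^{j-a}$.

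For~\eqref{Kelbis}, I would compute $\sum_{\ell\geq j}\Q_\ell$ directly using the standard identity relating the Poisson tail to the lower incomplete Gamma function,
\[
\P(Q_\lambda\geq j)=\frac{1}{(j-1)!}\int_0^\lambda u^{j-1}e^{-u}\,du,
\]
applied to $\lambda=pv$. This gives
\[
\sum_{\ell\geq j}\Q_\ell=\frac{1}{(j-1)!}\,\E\!\left(\int_0^{pv}u^{j-1}e^{-u}\,du\right),
\]
and a Fubini exchange with the Pareto density splits the region at $u=pb$: the inner integral $\int_{\max(u/p,b)}^{\infty} ab^a/x^{a+1}\,dx$ equals $1$ on $\{u\leq pb\}$ and $(pb/u)^a$ on $\{u>pb\}$. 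The $\{u>pb\}$ piece produces the leading term $(pb)^a\Gamma(j-a)/(j-1)!$ after extending the integral to $(0,\infty)$, and the two correction pieces (the $\{u\leq pb\}$ contribution and the lower-tail defect of the main integral) are both $O((pb)^j)$; summing a uniform-in-$\ell$ geometric bound of this type gives the stated $O((pb)^j/(1-pb))$ form.

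Alternatively~\eqref{Kelbis} can be obtained by summing~\eqref{Kel} over $\ell\geq j$, using the identity $a\Gamma(\ell-a)/\ell!=\Gamma(\ell-a)/(\ell-1)!-\Gamma(\ell+1-a)/\ell!$ (a telescoping that comes from $a\Gamma(\ell-a)=\ell\Gamma(\ell-a)-(\ell-a)\Gamma(\ell-a)=\ell\Gamma(\ell-a)-\Gamma(\ell+1-a)$) to collapse the series to its first term $\Gamma(j-a)/(j-1)!$; this bookkeeping is the most delicate part. The main obstacle is not really an obstacle of substance but a question of presentation: the errors in~\eqref{Kel} are not a priori uniform in $\ell$, so one has to track them through the summation (or redo the computation at the $\Q^+$ level as above) to justify the $1/(1-pb)$ denominator in~\eqref{Kelbis}.
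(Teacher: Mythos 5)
Your proposal is correct. For \eqref{Kel} and \eqref{equiv} the computation is essentially identical to the paper's: the substitution $u=px$ gives $\Q_j=\frac{a(pb)^a}{j!}\int_{pb}^\infty u^{j-a-1}e^{-u}\,du$, the lower-tail defect is $O((pb)^{j-a})$ so $\Q_j=a(pb)^a\Gamma(j-a)/j!+O((pb)^j)$, and the ratio follows from $\Gamma(x+1)=x\Gamma(x)$. For \eqref{Kelbis}, however, you take a genuinely different route. The paper writes $\sum_{n\geq j}\Gamma(n-a)/n!$ as a Gauss hypergeometric series $\frac{\Gamma(j-a)}{j!}F(j-a,1;j+1;1)$ and invokes the identity $F(a,b;c;1)=\Gamma(c)\Gamma(c-a-b)/\bigl(\Gamma(c-a)\Gamma(c-b)\bigr)$ to get $F(j-a,1;j+1;1)=j/a$, whereas you propose either (i) the Poisson tail representation $\P(Q_\lambda\geq j)=\frac{1}{(j-1)!}\int_0^\lambda u^{j-1}e^{-u}\,du$ followed by Fubini against the Pareto survival function, or (ii) the telescoping identity $a\Gamma(\ell-a)/\ell!=\Gamma(\ell-a)/(\ell-1)!-\Gamma(\ell+1-a)/\ell!$, which collapses the series immediately (noting that $\Gamma(\ell+1-a)/\ell!\to 0$ since $a>0$). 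Both of your alternatives are more elementary than the paper's appeal to the hypergeometric machinery, and your method (i) in fact delivers the slightly sharper error $O((pb)^j)$ in place of $O\bigl((pb)^j/(1-pb)\bigr)$, since it never sums a geometric series of error terms. Your method (ii) is, at bottom, a direct proof of the specific $z=1$ hypergeometric evaluation the paper needs, so it reproduces the paper's error bound exactly; your worry about uniformity of the constants in $\ell$ is easily discharged because $\int_0^1 u^{\ell-a-1}e^{-pbu}\,du\leq 1/(\ell-a)\leq 1/(j-a)$ for $\ell\geq j$. One small slip in presentation: the sentence about ``summing a uniform-in-$\ell$ geometric bound'' belongs to approach (ii), not to the Fubini computation in (i), which requires no $\ell$-summation at all.
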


\begin{proof}
For $j>a$,
\begin{multline}\label{equivQj}
\Q_{j}=\E\left(\frac{{\left(p v\right)}^j}{j!}e^{-{p}v}\right) =  ab^a \frac{p^{a} }{j!} \int_{p b}^{+\infty} u^{j-a-1} e^{-u}\,du \\=
a(p b )^{a}\frac{\Gamma(j-a)}{j!}-a\frac{(pb)^{j}}{j!}\int_0^1 u^{j-a-1}e^{-pb u}\,du.
\end{multline}
Therefore, by using the relation $\Gamma(x+1)=x\Gamma(x)$, we get the equivalence
\[
\frac{\Q_{j+1}}{\Q_j}=  \frac{j-a}{j+1}+O((pb)^{j-a}),
\]
which gives Equations~\eqref{equiv} and \eqref{Kel} by using Corollary~\ref{corasymp}.  
For the  mean value of $W_j^+$, Equation~\eqref{equivQj} gives the relation
\begin{align*}
\lim_{K\to +\infty} \frac{\E(W_j^+)}{K}  
&=  a(p b)^{a} \sum_{n\geq j}\frac{\Gamma(n-a)}{n!} +O\left(\frac{(pb)^j}{1-pb}\right)\\
&=  a(p b)^{a} \sum_{n\geq  0}\frac{\Gamma(n+j-a)\Gamma(n+1)}{\Gamma(j+n+1)} \frac{1^n}{n!}+O\left(\frac{(pb)^j}{1-pb}\right)\\
&=  a(p b)^{a} \frac{\Gamma(j-a)}{j!}F(j-a,1;j+1;1)+O\left(\frac{(pb)^j}{1-pb}\right),
\end{align*}
where $F(a,b;c;z)$ is the hypergeometric function satisfying
\[
F(a,b;c;1)=\frac{\Gamma(c)\Gamma(c-a-b)}{\Gamma(c-a)\Gamma(c-b)}
\]
(see Abramowitz and Stegun~\cite{Abramowitz}), and 
Equation~\eqref{Kelbis} follows.
\end{proof}

The shape parameter $a$ can be estimated via Relation~\eqref{Kelbis} by
\begin{equation}
\label{estima}
a = \lim_{K\to \infty} j \left(1- \frac{\E(W^+_{j+1})}{\E(W^+_j)}  \right) +O\left(\frac{(pb)^j}{1-pb}\right)
\end{equation}
for all $j>a$. This gives a means of estimating the shape parameter $a$. When observing drawn balls, we
have in fact only access to the quantity $\E(\tilde{K})$ of the number of sampled colors.  While this has no impact for the estimation of $a$,  this correcting term
is important when estimating $b$ from Equation~\eqref{Kelbis}. It is straightforward that  
$$
\tilde{K} = \sum_{i=1}^K \ind{\tilde{v}_i>0} =K-W_0
$$
and then when $K \to \infty$
$$
\E(\tilde{K}) \sim K(1-\Q_0) = K\left( 1-\E(e^{-pv})\right).
$$  
Since 
\begin{equation}
\label{estimnu}
1-\E(e^{-pv}) =  p \int_0^\infty e^{-px}\P(v>x) dx = bp + (bp)^a \Gamma(1-a,bp),
\end{equation}
where $\Gamma(a,x)$ is the incomplete Gamma function defined by $\Gamma(a,x) = \int_x^\infty t^{a-1}e^{-t}dt$, we can use the above equations together with Equation~\eqref{Kelbis} in order to  estimate $b$ and then $K$. It is also worth noting that $ 1-\E(e^{-pv})    \sim bp$ when $a>1$ and $bp \to 0$. 


\subsubsection{Weibull distributions} We assume in this section that the number of balls with a given color follows a Weibull distribution. In this case, we  have the following result, which follows from a simple variable change and the expansion of  $\exp(-x^\beta)$ in power series of $x^\beta$ or $\exp(-px)$ in power series of $x$; the proof is omitted.

\begin{proposition}
If $v$ has a Weibull distribution with skew parameter $\beta$ and scale parameter $\eta$, then for $0<\beta<1$
\begin{equation}\label{equiw}
\lim_{K\to +\infty} \E(W_{j+1}) = \frac{\beta}{j!}  \sum_{n=0}^\infty \frac{(-1)^n}{(p\eta)^{(n+1)\beta}}\frac{\Gamma((n+1)\beta+j)}{n!}
\end{equation}
and for $\beta>1$,
\begin{equation}\label{equiw+}
\lim_{K\to +\infty} \E(W_{j+1}) = \frac{(p\eta)^j}{j!}  \sum_{n=0}^\infty \frac{(-p\eta)^n}{n!}\Gamma\left(\frac{(n+j)}{\beta}+1\right).
\end{equation}
\end{proposition}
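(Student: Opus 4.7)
The starting point is Corollary~\ref{corasymp}, which reduces the statement to the explicit computation of $\Q_j=\E\bigl((pv)^j e^{-pv}/j!\bigr)$ when $v$ has the Weibull density $f_v(x)=(\beta/\eta^\beta)\,x^{\beta-1}\exp(-(x/\eta)^\beta)$. Substituting this density,
\[
\Q_j \;=\; \frac{p^j\beta}{j!\,\eta^\beta}\int_0^\infty x^{j+\beta-1}\,e^{-px}\,e^{-(x/\eta)^\beta}\,dx.
\]
The plan is then to expand exactly one of the two exponentials as a Taylor series, exchange summation and integration, and recognize each resulting term as a Gamma integral.

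In the regime $0<\beta<1$, I would expand the stretched exponential
\[
e^{-(x/\eta)^\beta}=\sum_{n\geq 0}\frac{(-1)^n}{n!}(x/\eta)^{n\beta},
\]
so that the $n$-th term contributes $\eta^{-n\beta}\,\Gamma(j+(n+1)\beta)/p^{j+(n+1)\beta}$; gathering powers of $p$ and $\eta$ produces~\eqref{equiw}. In the regime $\beta>1$ the stretched exponential decays too slowly to make the resulting series of absolute values summable, so I would instead expand $e^{-px}=\sum_n(-p)^n x^n/n!$. The substitution $u=(x/\eta)^\beta$ then converts the remaining integral into $(\eta^{j+n+\beta}/\beta)\,\Gamma((j+n)/\beta+1)$, and collecting constants yields~\eqref{equiw+}.

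The only nontrivial point, and the step I expect to require real care, is the justification of the exchange of summation and integration by Fubini, which in both cases reduces to the convergence of the series of absolute values. For $\beta<1$ one needs $\sum_n \Gamma(j+(n+1)\beta)/(n!\,(p\eta)^{n\beta})<\infty$, which holds by Stirling's formula since $\Gamma(n\beta)/n!\to 0$ when $\beta<1$. For $\beta>1$ the analogous requirement is $\sum_n (p\eta)^n\,\Gamma((j+n)/\beta+1)/n!<\infty$, which again follows from Stirling because $1/\beta<1$. Once this convergence is in hand, the remainder of the argument is purely bookkeeping of constants, which is presumably why the authors chose to omit the proof.
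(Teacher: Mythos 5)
Your proposal is correct and is exactly the route the paper sketches in the sentence preceding the proposition (``a simple variable change and the expansion of $\exp(-x^\beta)$ in power series of $x^\beta$ or $\exp(-px)$ in power series of $x$''), down to the choice of which exponential to expand in each regime and the use of Corollary~\ref{corasymp} to reduce to $\Q_j$. One cosmetic remark: the series you obtain is $\Q_j$, so the displayed identities should read $\lim_{K\to\infty}\E(W_j)/K$ on the left; the paper's $\E(W_{j+1})$ without the factor $1/K$ is a typo (consistent with the unnumbered formula for $\E(W_j)$ just after the proposition), and your intuitive gloss that the stretched exponential ``decays too slowly'' for $\beta>1$ has the direction reversed — $e^{-(x/\eta)^\beta}$ decays \emph{faster} there; the obstruction is that the absolute-value majorant $e^{+(x/\eta)^\beta}$ then grows too fast for $e^{-px}$ to dominate, equivalently $\Gamma(n\beta)/n!\to\infty$ — but the mathematical argument you give is the right one.
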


Note that $\E(W_j)$ can be written in the form
$$
\E(W_j) = \frac{1}{j!} \frac{\beta}{(p\eta)^\beta} \int_0^\infty u^{j+\beta-1}e^{-u+tu^\beta}du
$$
with $t = -1/(p\eta)^\beta$. The above integral is known in the literature as to be of the
Faxen's type and can be expressed  by means of Meijer $G$-function, when $\beta$ is a
rational number, see Abramowitz and Stegun~\cite{Abramowitz}.

Contrary to the case of Pareto distribution for the initial distribution of balls of a given color, there is no simple relations giving the parameters $\beta$ and $\eta$ from the mean values $\E(W_j)$, $j \geq 1$. In fact, we shall prove in the following that $\P(\tilde{v}\geq j)$ has also a Weibull tail. This eventually gives a means of identifying the parameters.

\section{Poisson approximations}
\label{distrib}

In the previous section, we have established bounds for the mean values of the random
variables $W_j$ and $W_j^+$. To obtain more information on their distributions, we intend
to use Chen-Stein method. For a fixed environment (namely fixed values of the quantities
$v_i$ for $i=1, \ldots,K$), these random variables appear as sums of non independent
Bernoulli random variables. A preliminary analysis of the Bernoulli random variables
appearing in the expression of $W_j$ reveals that it seems not possible to invoke a
monotonic coupling argument. It is well known (see \cite{Barbour} for details) that the
situation is more favorable with the random variables $W^+_j$ and we can specifically
prove that  if ${\cal F}$ is the set ${\cal F}=\{v_i, 1\leq i\leq K\}$, then the total
number $W_j^+$ of colors with at least $j$ balls at the end of the trial satisfies the
relation 
\begin{equation}\label{aa}
\left\|\P(W_j^+{\in} \cdot ~|~ \mathcal{F}){-}\P(Q_{\E(W_j^+~|~\mathcal{F})}{\in}\cdot\,)\right\|_{tv}{\leq}
\E\left(1{-}\frac{\Var(W_j^+\mid {\cal F})}{\E(W_j^+\mid {\cal F})}\right).
\end{equation}
Indeed, given the random variables $v_i$, the model is equivalent to a standard urn and ball   problem consisting of putting  $pV_i$
balls  into $K$ urns, a  ball falling into urn $i$ with probability $p_i=v_i/V_i$. The
number of balls in urn $i$ is the number of balls with color $i$ in the original urn and ball problem. Even in  the case when the quantities $p_i$ are different, the
variables $I^+_{i,j}\stackrel{def}{=} \ind{\tilde{v}_i\geq j}$ are negatively
related so that Theorem~\ref{theobar} can be used. See Page~24 and Corolary~2.C.2 Page~26
of \cite{Barbour} for a definition and the main inequality in this domain. Chapter~6 of
this reference is entirely devoted to related occupancy problems.

The rest of this section is devoted to the estimation of the bound in Equation~\eqref{aa}. We first establish the following lemma.

\begin{lemma} \label{PA}
For a fixed environment ${\cal F}=\{v_i,1\leq i\leq K\}$, the distance in total variation between the distribution of $W^+_j$ and the  Poisson distribution
$Q_{\E(W^+_k~|~{\cal F})}$ satisfies the inequality
\begin{align}\label{bizerte}
\lim_{K\rightarrow +\infty} \|   \P(W^+_j  \in \cdot~|~ {\cal F})- \P(Q_{\E(W^+_k~|~{\cal  F})}\in\cdot)\| _{tv}\leq
\frac{m_{2,j}(p)}{m_{j}(p)}+\frac{p}{\E(v)} \frac{m'_{j}(p)^2}{m_{j}(p)},
\end{align}
where
$m_j(p)$ and $m_{2,j}(p)$ are the first two moments of the random variable defined by
\begin{equation}
\label{defXk}
X_j(p)=\sum_{\ell \geq j} \frac{(pv)^\ell}{\ell!} e^{-pv},
\end{equation}
and the prime sign denotes the derivative with respect to $p$.
\end{lemma}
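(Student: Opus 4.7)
The starting point is the Chen-Stein inequality of Theorem~\ref{theobar}, applied to the conditional distribution of $W_j^+$ given $\mathcal{F}=\{v_i,\,1\leq i\leq K\}$, exactly as in~\eqref{aa}: it gives the pointwise upper bound $1-\Var(W_j^+\mid\mathcal{F})/\E(W_j^+\mid\mathcal{F})$ on the total variation norm. Writing $W_j^+=\sum_{i=1}^K I_i$ with $I_i=\ind{\tilde v_i\geq j}$ and using the elementary identity $\E(W_j^+)-\Var(W_j^+)=\sum_i\P(I_i=1)^2-\sum_{i\neq k}\Cov(I_i,I_k)$ (with every probability and moment conditional on $\mathcal{F}$), this bound splits as
\[
1-\frac{\Var(W_j^+\mid\mathcal{F})}{\E(W_j^+\mid\mathcal{F})}=\frac{\sum_{i=1}^K\P(I_i=1\mid\mathcal{F})^2}{\E(W_j^+\mid\mathcal{F})}-\frac{\sum_{i\neq k}\Cov(I_i,I_k\mid\mathcal{F})}{\E(W_j^+\mid\mathcal{F})}.
\]
The plan is to match the $K\to+\infty$ limits of these two pieces with the two terms on the right-hand side of~\eqref{bizerte}.

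For the diagonal piece, the marginal law of $\tilde v_i$ under $\P(\cdot\mid\mathcal{F})$ is $\mathrm{Bin}(pV,v_i/V)$, so the Le~Cam estimate already used in the proof of Proposition~\ref{boundobs} yields $\P(I_i=1\mid\mathcal{F})=\sum_{\ell\geq j}(pv_i)^\ell e^{-pv_i}/\ell!+O(pv_i^2/V)$. Squaring, summing, and applying the strong law of large numbers to the i.i.d.\ sequence $(v_i)$ delivers $\E(W_j^+\mid\mathcal{F})\sim Km_j(p)$ and $\sum_i\P(I_i=1\mid\mathcal{F})^2\sim Km_{2,j}(p)$, so that the first piece converges to $m_{2,j}(p)/m_j(p)$.

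The off-diagonal piece is the real obstacle. Conditionally on $\mathcal{F}$ the pair $(\tilde v_i,\tilde v_k)$ is the two-dimensional marginal of a multinomial with $pV$ draws and cell probabilities $v_\cdot/V$, and differs from a pair of independent Poisson variables of means $pv_i,pv_k$ only at order $1/V$. I would expand both $\P(\tilde v_i=a,\tilde v_k=b\mid\mathcal{F})$ and $\P(\tilde v_i=a\mid\mathcal{F})\P(\tilde v_k=b\mid\mathcal{F})$ to first order in $1/V$, tracking the three sources of correction --- the multinomial coefficient $n!/((n-a-b)!\,n^{a+b})$, the factor $(1-(v_i+v_k)/V)^{n-a-b}$, and the analogous marginal factors --- and then sum over $a,b\geq j$. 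After the zeroth-order terms cancel and the auxiliary $X_{j-2}$ and $X_j^2$ corrections drop out of the covariance by design, one is left with
\[
-\Cov(I_i,I_k\mid\mathcal{F})\sim\frac{p}{V}\cdot v_i\,\frac{(pv_i)^{j-1}e^{-pv_i}}{(j-1)!}\cdot v_k\,\frac{(pv_k)^{j-1}e^{-pv_k}}{(j-1)!}.
\]
The quantity $v(pv)^{j-1}e^{-pv}/(j-1)!$ that appears is exactly what comes out of differentiating $X_j(p)$ in $p$ (the sum defining $X_j(p)$ in~\eqref{defXk} telescopes under $d/dp$), and its expectation is $m_j'(p)$.

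Summing over $i\neq k$ and using the strong law of large numbers ($V/K\to\E(v)$ and $K^{-1}\sum_i v_i(pv_i)^{j-1}e^{-pv_i}/(j-1)!\to m_j'(p)$) yields
\[
-\sum_{i\neq k}\Cov(I_i,I_k\mid\mathcal{F})\sim\frac{p}{V}\Bigl(\sum_i v_i\,\tfrac{(pv_i)^{j-1}e^{-pv_i}}{(j-1)!}\Bigr)^2\sim\frac{Kp\,m_j'(p)^2}{\E(v)},
\]
the discarded $i=k$ diagonal being $O(p)$ and hence of lower order. Dividing by $\E(W_j^+\mid\mathcal{F})\sim Km_j(p)$ produces the second term of~\eqref{bizerte} and completes the proof. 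The genuine technical difficulty lies entirely in the covariance expansion of the preceding paragraph: the independent-Poisson limit makes every $O(1)$ term cancel, so one must meticulously chase all the $1/V$ corrections of both the joint distribution and the product of marginals before the surviving cross term in $v_iv_k/V$ emerges.
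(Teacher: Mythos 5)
The proposal is correct and arrives at the same limiting bound, but it organizes the computation differently enough to be worth noting. The paper expands $\E(W_j^+\mid\mathcal F)$ and $\E((W_j^+)^2\mid\mathcal F)-\E(W_j^+\mid\mathcal F)$ wholesale to order $1/K$, forms the ratio, and only at the very end identifies the surviving correction through the identity $(m_j^2)''/2-m_jm_j''=(m_j')^2$. You instead split the Chen--Stein bound at the outset into a diagonal part $\sum_i\P(I_i=1\mid\mathcal F)^2/\E(W_j^+\mid\mathcal F)$ and an off-diagonal part $-\sum_{i\neq k}\Cov(I_i,I_k\mid\mathcal F)/\E(W_j^+\mid\mathcal F)$; the diagonal part is handled directly by the Poisson approximation of the marginal and gives $m_{2,j}/m_j$, while the off-diagonal part is shown to produce $p\,m_j'(p)^2/(\E(v)m_j(p))$. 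Your observation that $X_{i,j}'(p)$ telescopes to the single term $v_i(pv_i)^{j-1}e^{-pv_i}/(j-1)!$ is a genuinely cleaner way to see where the derivative comes from, and the final sums over $i\neq k$ plus the law of large numbers go through as you describe. The one place your sketch is looser than the paper's is the covariance expansion itself: ``the zeroth-order terms cancel and the auxiliary corrections drop out by design'' is asserting the answer rather than deriving it. The transparent way to close that step is to use the paper's second-derivative representation --- both the joint and the product of marginals carry a $-\tfrac{p}{2V}\tfrac{d^2}{dp^2}(\cdot)$ correction, so by the product rule $\Cov(I_i,I_k\mid\mathcal F)\sim -\tfrac{p}{2V}\bigl[(X_{i,j}X_{k,j})''-X_{i,j}''X_{k,j}-X_{i,j}X_{k,j}''\bigr]=-\tfrac{p}{V}X_{i,j}'X_{k,j}'$, which is exactly the formula you write down. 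With that algebraic step made explicit your decomposition is a somewhat more modular presentation of the same underlying multinomial-to-Poisson expansion.
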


\begin{proof}
For $\mathcal{F}$ fixed, the number $W_j$ of colors with $j \leq pV$ balls at the end of the trial is such that
\[
\E(W_j~|~{\cal F})=\sum_{i=1}^K \binom{pV}{j}\left(\frac{v_i}{V}\right)^j\left(1-\frac{v_i}{V}\right)^{pV-j}.
\]
By using the fact that 
$$
\frac{1}{V}  = \frac{1}{K\E(v)} + o\left(\frac{1}{K}\right) \quad \mbox{a.s.}
$$
for large $K$, straightforward calculations show that
\begin{multline}
\label{eqtech}
\E(W_j ~|~{\cal F}) = \sum_{i=1}^K \frac{(pv_i)^j}{j!} e^{-pv_i}\left(1-\frac{j(j-1)}{2pK\E(v)}+\frac{2jv_i-pv_i^2}{2\E(v)K}\right)+o\left(\frac{1}{K}\right)\\
= \sum_{i=1}^K \left( \frac{(pv_i)^j}{j!}e^{-pv_i}
-\frac{p}{2\E(v)K}\frac{d^2}{dp^2}\left(e^{-pv_i}\frac{(pv_i)^j}{j!}\right)\right)+o\left(\frac{1}{K}\right).
\end{multline}
By summing up the terms above and by checking that the
$o\left(\frac{1}{K}\right)$ term remains valid, since the sum can be written as  $\sum_{i=1}^K f(v_i)e^{-pv_i}/K^2$, where $f$ is a polynomial, we have for $j\geq 1$ and $0<p<1$
\[
\E(W^+_j~|~{\cal F})=\sum_{\ell \geq j} \E(W_\ell ~|~{\cal   F})=\sum_{i=1}^K X_{i,j}(p)-\frac{p}{2\E(v)K}  \sum_{i=1}^K X''_{i,j}(p)+o\left(\frac{1}{K}\right),
\]
where
$$
X_{i,j} (x) = \sum_{\ell \geq j} \frac{(xv_i)^\ell}{\ell!}e^{-x v_i}.
$$

For the variance, if $I_{i,j}$ is 1 if color  $i$ has exactly $j$ balls at the end of the trial and  0 otherwise, then  $W_j =\sum_{i=1}^K I_{i,j}$ and, for $j \neq \ell$,
$$
\E(W_j W_\ell ~|~ {\cal F})=\sum_{1\leq i\not=m\leq K}\E(I_{i,j}I_{m,\ell}~|~{\cal F})
$$
and
$$
\E(W_j^2~|~{\cal F})=\E(W_j~|~{\cal F})+\sum_{1\leq i\not=m\leq K}\E(I_{i,j}I_{m,j}~|~{\cal F}).
$$
For $j,\ell$ such that $j+\ell \leq pV$, 
$$
\E(I_{i,j}I_{m,\ell}~|~{\cal F}) = 
\frac{(pV)!}{j!\ell!(pV-j-\ell)!}\left(\frac{v_i}{V}\right)^j\left(\frac{v_m}{V}\right)^\ell\left(1-\frac{v_i+v_m}{V}\right)^{pV-j-\ell}.
$$
The quantity in the right hand side of the above equation can be expanded  as
$$
\frac{e^{-p(v_i+v_m)}p^{j+\ell}v_i^j v_m^\ell}{j!\ell!} - \frac{p}{2V} \frac{e^{-p(v_i+v_m)} v_i^j v_m^\ell}{j!\ell!}c_{i,m}(j,\ell)   +o\left(\frac{1}{K}\right),
$$
where
$$
c_{i,m}(j,\ell) =
p^{j+\ell -2}(j+\ell)(j+\ell -1)-2(j+\ell)(v_i+v_m)p^{j+\ell-1}+(v_i+v_m)^2 p^{j+\ell} 
$$
is such that
$$
 \frac{e^{-p(v_i+v_m)}v_i^jv_m^\ell}{j!\ell!} c_{i,m}(j,\ell)  =   \frac{d^2}{dp^2}    \frac{e^{-p(v_i+v_m)} v_i^j v_m^\ell}{j!\ell!}.
$$
Since
\[
(W^{+}_j)^2 =\left(\sum_{\ell \geq j} W_\ell\right)^2=\sum_{\ell \not=k \geq
  j}W_kW_\ell +\sum_{\ell\geq j}W_\ell^2,
\]
\begin{multline*}
\E((W^{+}_{j})^2 ~|~{\cal F})- \E(W_j^+~|~{\cal F})= \sum_{1\leq i \not=m  \leq K} \sum_{\ell,k\geq j} \E(I_{i,k}I_{m,\ell}~|~{\cal F})\\
= \sum_{1\leq i\not=m\leq K} \left(X_{i,j}(p)X_{m,j}(p)-\frac{p}{2\E(v)K}    \left( X_{i,j}X_{m,j}\right)'' (p)\right) +o\left(\frac{1}{K}\right),
\end{multline*}
and
$$
1-\frac{\Var(W^+_j~|~{\cal F})}{\E(W^+_j~|~{\cal F})} = \frac{\E(W^+_j~|~{\cal F})-\E((W^{+}_{j})^2 ~|~{\cal F})+\E(W^+_j~|~{\cal F})^2}{\E(W^+_j~|~{\cal F})}.
$$
The right-hand side of this equation  can be expanded as
\begin{multline*}
\frac{1}{\sum_{i=1}^K X_{i,j}+O(1)}\left(  -\sum_{1 \leq i \not =m \leq K}
  X_{i,j}(p)X_{m,j}(p)\right. \\ 
\left. +\frac{p}{2\E(v)K}\sum_{1 \leq i \not =m \leq  K} (X_{i,j}X_{m,j})''(p) +\left(\sum_{i=1}^K X_{i,j}(p) -\frac{p}{2\E(v)K} \sum_{i=1}^K X_{i,j} ''(p)\right)^2\right)  \\ + o\left(\frac{1}{K}\right)
\end{multline*}
which  can be rewritten as
\begin{multline*}
\frac{1}{\sum_{i=1}^K X_{i,j}+O(1)}\left( \sum_{1 \leq i \leq  K} X_{i,j}^2(p) \right. \\ 
\left. + \frac{p}{2\E(v)K}\left(\sum_{1 \leq i \not =m \leq  K}(X_{i,j}X_{m,j})''(p)-2\sum_{i=1}^K X_{i,j}(p) \sum_{i=1}^K X''_{i,j}(p)\right) \right)+O(1)
\end{multline*}
  using that  
\[
\sum_{i\not= m} X_{i,j} X_{m,j}=\left(\sum_iX_{i,j}\right)^2-\sum_i X_{i,j}^2.
\] 
By the law of large numbers, we have that, almost surely, 
\begin{align*}
\lim_{K \to +\infty}\frac{1}{K} \sum_{i=1}^K X_{i,j}^2 (p)= \E(X_j^2 (p))= m_{2,j}(p),\\
\lim_{K \to +\infty} \frac{1}{K^2}\sum_{i\not = m}^K (X_{i,j}X_{m,j})'' (p)=(m_{j}^2)''(p),
\end{align*}
together with 
\[
\lim_{K \to +\infty} \frac{1}{K} \sum_{i=1} X_{i,j}(p)= m_{j}(p) \quad \mbox{and} 
\quad \lim_{K \to +\infty} \frac{1}{K} \sum_{i=1}^K     X{''}_{i,j}(p) =   m_{j}''(p).
\]
Hence, 
\begin{align*}
\lim_{K\to \infty} 1-\frac{\Var(W^+_j~|~{\cal F})}{\E(W^+_j~|~{\cal F})} &=
\frac{m_{2,j}(p)+p[(m_{j}^2)''(p)/2-m_{j}(p)m_{j}''(p)]/\E(v)}{m_{j}(p)}\;\;  a.s.\\
&= \frac{m_{2,j}(p)+{p}m_{j}'(p)^2/\E(v)}{m_j(p)} \quad a.s.
\end{align*}
and the result follows.
\end{proof}

To illustrate the fact that the bound in Equation~\eqref{bizerte} is tight when $p\to 0$
and $v$ has finite moments of any order, let us note that, provided the corresponding
moments are finite, 
\begin{equation}\label{orme}
\lim_{p\to 0}  \frac{m_j(p)}{p^j}= \frac{v^j}{j!}
\end{equation}
Moreover,
\[
\lim_{p\to 0} \frac{m_{2,j}(p)}{p^{2j}}= \frac{\E(v^{2j})}{j!^2}  \quad \mbox{and}
\quad \lim_{p\to 0} \frac{m'_j(p)}{p^{j-1}}=\frac{\E(v^j)}{(j-1)!}.
\]
Thus, the limit when $K$ tends to $+\infty$ of the bound given by Equation~\eqref{bizerte} is equivalent  to
\[
\frac{j p^{j-1}}{(j-1)!} \frac{\E(v^j)}{\E(v)}
\]
when $p$ tends to 0. If $j \geq 2$, this term tends to 0 when $p\to 0$.

By using the above lemma, we are now able to state a  limit result for the
distribution of the random variables $W_j^+$. 
\begin{proposition}\label{cltWj+}
The inequality
\begin{multline}\label{bound}
\lim_{K\to +\infty}
\sup_{y\in\R}\left|\P\left(\frac{W_j^+-\E(W_j^+)}{\sqrt{\E(W_j^+)}}\leq y\right)-
\int_{-\infty}^y \frac{e^{-u^2/2}}{\sqrt{2\pi}}\,du\right| \\ 
\leq  \frac{m_{2,j}(p)}{m_j(p)}+\frac{p}{\E(v)}\frac{(m_{j}'(p))^2}{m_j(p)}
\end{multline}
holds. 
\end{proposition}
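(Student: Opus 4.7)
The plan is to combine Lemma~\ref{PA} with the classical Poisson central limit theorem, following the triangle-inequality template spelled out in the paragraphs following Theorem~\ref{theobar}. Set $\mu_K := \E(W_j^+)$. Since the supremum in the left-hand side of~(\ref{bound}) is dominated by the total variation distance, and since total variation is invariant under the affine standardization map, I would first write
\[
\sup_{y\in\R}\left|\P\!\left(\frac{W_j^+-\mu_K}{\sqrt{\mu_K}}\leq y\right)-\int_{-\infty}^{y}\frac{e^{-u^2/2}}{\sqrt{2\pi}}\,du\right| \leq T_1(K)+T_2(K),
\]
with $T_1(K) := \|\P(W_j^+\in\cdot)-\P(Q_{\mu_K}\in\cdot)\|_{tv}$ and $T_2(K)$ the Kolmogorov distance between the standardized Poisson $Q_{\mu_K}$ and $\mathcal{N}$.

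The second term $T_2(K)$ is handled directly by the classical central limit theorem for Poisson distributions: since $\mu_K\sim K\,m_j(p)\to+\infty$ as $K\to+\infty$, one has $T_2(K)\to 0$. For the first term, I would condition on the environment $\mathcal{F}=\{v_i,1\leq i\leq K\}$ and exploit the convexity of the total variation norm in order to write $T_1(K)$ as the expectation of $\|\P(W_j^+\in\cdot\mid\mathcal{F})-\P(Q_{\E(W_j^+\mid\mathcal{F})}\in\cdot)\|_{tv}$, plus an additive discrepancy arising from the replacement of the random Poisson parameter $\E(W_j^+\mid\mathcal{F})$ by its mean $\mu_K$. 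Lemma~\ref{PA} then yields the announced bound on the conditional part in the limit $K\to+\infty$, while the strong law of large numbers applied to the sum $\sum_{i=1}^K X_{i,j}(p)$ gives $\E(W_j^+\mid\mathcal{F})/\mu_K\to 1$ almost surely, forcing the discrepancy term to vanish.

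The main obstacle is precisely the mismatch between the random parameter $\E(W_j^+\mid\mathcal{F})$ that naturally arises from the conditional Chen--Stein bound (Theorem~\ref{theobar}) used inside the proof of Lemma~\ref{PA}, and the deterministic normalization $\sqrt{\mu_K}=\sqrt{\E(W_j^+)}$ appearing in the statement. The proof must verify that the fluctuations of $\E(W_j^+\mid\mathcal{F})$ around $\mu_K$, which by the LLN are of order smaller than $\sqrt{\mu_K}$, do not contribute to the leading-order bound. A similar check is needed to replace $\sqrt{\Var(W_j^+)}$ by $\sqrt{\mu_K}$ in the normalization: this replacement is harmless because the ratio $\Var(W_j^+)/\mu_K$ tends to $1$, consistently with the bound we are proving. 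Once these two reductions are carried out, letting $K\to+\infty$ in $T_1(K)+T_2(K)$ yields exactly the right-hand side of~(\ref{bound}).
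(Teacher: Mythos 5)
Your overall strategy is the one the paper follows: condition on the environment $\mathcal{F}$, invoke Lemma~\ref{PA} for the Chen--Stein part, invoke the Poisson central limit theorem for the remaining part, and decondition using the fact that $\E(W_j^+\mid\mathcal{F})\sim Km_j(p)\sim\E(W_j^+)$ almost surely. So in broad outline you are reproducing the paper's argument.

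There is, however, a genuine gap in the way you handle the random Poisson parameter, and the claim you make to close it is false. You assert that the fluctuations of $\E(W_j^+\mid\mathcal{F})=\sum_i X_{i,j}(p)$ around $\mu_K$ are, ``by the LLN, of order smaller than $\sqrt{\mu_K}$.'' The strong law only gives $\E(W_j^+\mid\mathcal{F})-\mu_K=o(K)$ a.s.; it says nothing at the $\sqrt K$ scale. The central limit theorem applied to the i.i.d. sum $\sum_i X_{i,j}(p)$ shows that these fluctuations are in fact of order $\sqrt{K\Var(X_j(p))}$, i.e.\ \emph{the same} order as $\sqrt{\mu_K}\sim\sqrt{Km_j(p)}$. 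Consequently, the ``additive discrepancy'' you introduce, namely $\E\bigl[\|\P(Q_{\E(W_j^+\mid\mathcal{F})}\in\cdot)-\P(Q_{\mu_K}\in\cdot)\|_{tv}\bigr]$, does not vanish as $K\to\infty$: by a local-limit estimate for Poisson laws it is of order $\E\bigl|\E(W_j^+\mid\mathcal{F})-\mu_K\bigr|/\sqrt{\mu_K}\sim\sqrt{\Var(X_j(p))/m_j(p)}$, a constant independent of $K$ that is much larger than the stated bound $m_{2,j}(p)/m_j(p)+\cdots$ (it is of order the square root of it). So the decomposition $T_1(K)+T_2(K)$ passing through a \emph{fixed} Poisson target $Q_{\mu_K}$ cannot produce the announced inequality. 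The correct route (which is what the paper's proof gestures at, even if its display misaligns the normalizations on the two sides) is to never compare to a fixed Poisson: after Lemma~\ref{PA}, apply the Poisson CLT to the \emph{conditionally} standardized Poisson $(Q_{\E(W_j^+\mid\mathcal{F})}-\E(W_j^+\mid\mathcal{F}))/\sqrt{\E(W_j^+\mid\mathcal{F})}$, then exchange $\sqrt{\E(W_j^+\mid\mathcal{F})}$ for $\sqrt{\mu_K}$ using the a.s.\ ratio convergence, and handle the leftover \emph{random centering shift} $(\mu_K-\E(W_j^+\mid\mathcal{F}))/\sqrt{\mu_K}$ by averaging $\Phi(y+\text{shift})$ over the shift, whose symmetric, mean-zero character makes the resulting error second order in the shift rather than first order. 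That last step is precisely what has to be made explicit, and your proposal currently replaces it by an incorrect appeal to the law of large numbers.
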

\noindent
Thus, for $j\geq 2$ and for small $p$, this gives the following approximation
\[
W_j^+\sim  \E(W_j^+)+\sqrt{ \E(W_j^+)}, 
\]
where $G$ is a standard normal random variable. It should be noted nevertheless that
Equation~\eqref{bound} is almost a central limit result but because of the scaling in 
$1/\sqrt{\E(W_j^+)}$ instead of $1/\sqrt{\Var(W_j^+)}$, the bound in the right hand side
is not $0$ as $K$ gets large but, according to the proof of Lemma~\ref{bizerte}, only an
upper bound on the distance between $\E(W_j^+)$ and $\Var(W_j^+)$.

\begin{proof}
From Lemma~\ref{PA}, we have
\begin{multline*}
\left\| \P\left(\frac{W_j^+ - \E(W_j^+)}{\sqrt{\E(W_j^+)}} \in \cdot ~|~\mathcal{F}   \right) - \P\left(\frac{Q_{\E(W_j^+~|\mathcal{F})}  - \E(W_j^+|\mathcal{F})}{\sqrt{\E(W_j^+|\mathcal{F})}} \in \cdot  \right) \right\|_{tv} \\ \leq  \frac{m_{2,j}(p)}{m_{j}(p)}+\frac{p}{\E(v)} \frac{m'_{j}(p)^2}{m_{j}(p)}.
\end{multline*}
From Equation~\eqref{eqtech}, we have that 
\[
\lim_{K \to \infty}\frac{1}{K}\E(W_j^+~|~\mathcal{F})
=\E(X_j(p)) = K\sum_{\ell \geq j} \Q_\ell=Km_j(p),
\]
where the quantities $\Q_\ell$
are defined in Proposition~\ref{boundobs}. In addition, from  Corollary~\ref{corasymp},
$\E(W_j^+)\sim Km_j(p) $ when $K\to +\infty$. The result then follows by applying the
central limit theorem for Poisson distributions and by deconditioning with respect to
$\mathcal{F}$. 
\end{proof}

To conclude this section, let us notice that when balls are drawn with probability $p$
independently of each other, we do not have to condition on the environment and we
have 
$$
\left\|\P(W_j^+{\in} \cdot ){-} \P(Q_{\E(W_j^+)}\in \cdot) \right\|_{tv}{\leq} \frac{\E\left(\sum_{k=j}^v \binom{v}{k}p^k (1-p)^{v-k}\ind{v\geq j}  \right)^2}{ \E\left(\binom{v}{j}p^j(1-p)^{v-j} \ind{v\geq j}\right)},
$$
It is worth noting that the results are independent of the number of colors and that we do not need take $K \to \infty$ to obtain a bound for the distance in total variation. In addition, when $\E(W_j)$ become large, then it is possible to obtain a central limit-type approximation  similar to Proposition~\ref{cltWj+}.

\section{Comparison with original distributions}
\label{comparison}
\subsection{Uniform model}

In this section, we compare the distribution of the number $\tilde{v}$ of balls drawn with a given color with that of the original  number $v$ of balls with a given color. We are in particular interested in giving a sense to the heuristic stating that $v$ and $\tilde{v}/p$ have distributions close to each other.

\begin{proposition}\label{taildist}
Under the condition that the random variable $v$ has a Weibull or Pareto distribution, we have
$$
\lim_{j\to \infty}\lim_{K\to \infty} \frac{\E(W^+_j)}{K\P(v\geq j/p)}=1.
$$
\end{proposition}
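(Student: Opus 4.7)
The strategy is to reduce the double limit using Corollary~\ref{corasymp}, which gives $\lim_{K\to\infty}\E(W_j^+)/K = \sum_{\ell\geq j}\Q_\ell$, so that the claim is equivalent to
\[
\lim_{j\to\infty}\frac{\sum_{\ell\geq j}\Q_\ell}{\P(v\geq j/p)} = 1.
\]
I would then handle the Pareto and Weibull cases separately, leaning on the explicit computations already performed in Section~\ref{means}.

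For the Pareto distribution, Equation~\eqref{Kelbis} already gives $\sum_{\ell\geq j}\Q_\ell = (pb)^{a}\Gamma(j-a)/(j-1)! + O((pb)^{j}/(1-pb))$, while $\P(v\geq j/p)=(pb/j)^{a}$ for $j\geq pb$. After division, the ratio reduces to $j^{a}\Gamma(j-a)/\Gamma(j) + O(j^{a}(pb)^{j-a})$, and the classical asymptotic $\Gamma(j-a)/\Gamma(j)\sim j^{-a}$ together with $pb<1$ immediately yields the limit $1$.

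For the Weibull distribution, I would exploit the Poisson--Gamma duality $\P(\mathrm{Poisson}(\lambda)\geq j) = \P(T_j\leq\lambda)$, with $T_j\sim\Gamma(j,1)$ independent of $v$. This gives the compact representation
\[
\sum_{\ell\geq j}\Q_\ell \;=\; \E\bigl[\P(v\geq T_j/p\mid T_j)\bigr] \;=\; \frac{1}{(j-1)!}\int_0^{\infty} t^{j-1}e^{-t-(t/(p\eta))^{\beta}}\,dt,
\]
to be compared with $\P(v\geq j/p)=e^{-(j/(p\eta))^{\beta}}$. I would then perform a Laplace/saddle-point analysis of the integral: locate the saddle $t^{\star}$ of $\phi(t)=(j-1)\log t - t - (t/(p\eta))^{\beta}$, expand $\phi$ around $t^{\star}$ to sufficient order, and combine with Stirling's formula for $(j-1)!$ to show that the dominant contribution is exactly $e^{-(j/(p\eta))^{\beta}}$.

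The main obstacle is precisely the Weibull step. A naive Gaussian substitution $T_j = j+\sqrt{j}\,Z_j$ produces an exponent $(T_j/(p\eta))^{\beta}-(j/(p\eta))^{\beta}$ of order $j^{\beta-1/2}$, which vanishes in probability only for $\beta<1/2$. Making the proof work requires carrying the Laplace expansion far enough and carefully tracking the Gaussian curvature factor $\sqrt{2\pi/|\phi''(t^{\star})|}$ against the Stirling approximation of $\Gamma(j)$, so that all the intermediate $O(j^{2\beta-1})$ and $O(j^{\beta-1})$ corrections in the exponent cancel. A secondary check is that the Pareto remainder $O((pb)^{j}/(1-pb))$, and the analogous error term for Weibull, are indeed negligible compared with $\P(v\geq j/p)$, which follows from the exponential decay in $j$ at fixed $p$.
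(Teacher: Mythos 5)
The reduction via Corollary~\ref{corasymp} and your clean Pareto computation are fine, and the latter is actually a tidier route than the paper's (the paper instead runs a generic Laplace argument on the lattice sum $\sum_\ell e^{f_j(\ell)}\P(v=\ell)$ with $f_j(\ell)=-p\ell+j\log(p\ell)$ and then sums over $\ell\geq j$, rather than reading the Pareto asymptotics directly off Equation~\eqref{Kelbis}). Your Poisson--Gamma rewrite for the Weibull case is also a pleasant, continuous analogue of the paper's discrete Laplace method. However, the resolution you sketch for the Weibull obstruction will not work: the $O(j^{2\beta-1})$ correction in the exponent does \emph{not} cancel.

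To see this concretely, set $\mu=(p\eta)^{-\beta}$ and $\phi(t)=(j-1)\log t-t-\mu t^{\beta}$. The saddle $t^{\star}=j-c$ satisfies $c=\mu\beta j^{\beta}+O(1)+O(j^{2\beta-1})$. Expanding $\phi(t^{\star})$ around $(j-1)\log j-j-\mu j^{\beta}$, the two competing second-order terms are $-c^{2}/(2j)=-\tfrac12\mu^{2}\beta^{2}j^{2\beta-1}+\cdots$ and $\mu\beta c\,j^{\beta-1}=\mu^{2}\beta^{2}j^{2\beta-1}+\cdots$; they add to $+\tfrac12\mu^{2}\beta^{2}j^{2\beta-1}$, not to zero. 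The curvature factor $\sqrt{2\pi/|\phi''(t^{\star})|}\sim\sqrt{2\pi j}$ is absorbed exactly by Stirling for $(j-1)!$ and produces no further correction in the exponent. Hence
\[
\frac{\sum_{\ell\geq j}\Q_{\ell}}{\P(v\geq j/p)}=\exp\!\Big(\tfrac12\mu^{2}\beta^{2}j^{2\beta-1}+o(j^{2\beta-1})\Big),
\]
which tends to $1$ only for $\beta<1/2$, to the constant $e^{\mu^{2}/8}$ at $\beta=1/2$, and diverges for $\beta>1/2$. (A quick check at $\beta=1$: $\sum_{\ell\geq j}\Q_{\ell}=(p\eta/(1+p\eta))^{j}$ while $\P(v\geq j/p)=e^{-j/(p\eta)}$, and their ratio is $(p\eta\,e^{1/(p\eta)}/(1+p\eta))^{j}\to\infty$.) So the corrections you hope will cancel genuinely survive, and the statement as written is false for Weibull $\beta\geq1/2$.

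You should not feel bad about this: the paper's own proof has the very same gap. Its key claim that ``$\P(v=\ell)/\P(v=j/p)-1\sim 0$ uniformly on $j$ for $\ell$ in the neighborhood of $j/p$'' must be read on the Gaussian scale $|\ell-j/p|=O(\sqrt{j}/p)$, and over that window the Weibull factor varies by $\exp(-\mu\beta\,j^{\beta-1/2}(1+o(1)))$, which is bounded away from $1$ precisely when $\beta\geq1/2$. This is consistent with the fact that, for the probabilistic model, the paper's own final Corollary in Section~\ref{comparison} restricts the Weibull case to $\beta\in(0,1/2)$; Proposition~\ref{taildist} needs the same restriction.
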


\begin{proof}
From Corollary~\ref{corasymp}, we know that $\E(W_j)/K \to \Q_j$ when $K \to \infty$. Since 
$$
\Q_j = \E\left(\frac{(pv)^j}{j!} e^{-pv} \right) = \sum_{\ell = 1}^\infty \frac{(p\ell)^j}{j!} e^{-p\ell} \P(v=l),
$$
we can show that if $v$ has a Weibull or Pareto  distribution, then   $\Q_j \sim \P(v=j/p)/p$ when $j \to \infty$. Indeed, the above sum can be rewritten as
$$
 \frac{1}{j!} \sum_{\ell = 1}^\infty  e^{f_j(\ell)} \P(v=\ell),
$$
where $f_j(\ell) = -p\ell+j\log(p\ell)$, which attains its maximum at point $j/p$ with
$f''_j(j/p) = -p^2/j$. If the random variable $v$  is Weibull or Pareto and $j/p$ is
sufficiently large, then $\P(v=\ell)/\P(v=j/p) - 1\sim 0$ uniformly  on $j$ for $\ell$ in the neighborhood of
$j/p$. It follows that 
$$
\Q_j \sim \frac{1}{j!} \P(v=j/p)e^{f_j(j/p)} \sum_{\ell=-\infty}^\infty e^{-\ell^2 \frac{p^2}{2j}}.
$$
For $a>0$ converging to $0$,
\begin{multline*}
\sum_{\ell=-\infty}^\infty e^{-a\ell^2} =\sum_{\ell=-\infty}^\infty\int_{0}^{+\infty}
\ind{u>a\ell^2} e^{-u}\,du \sim 2 \int_0^{+\infty} \sqrt{\frac{u}{a}} e^{-u}\,du
\\=2 \int_0^{+\infty} \frac{u^2}{\sqrt{a}} e^{-u^2/2}\,du=\sqrt{\frac{\pi}{a}}
\end{multline*}
and by Stirling formula $j! \sim \sqrt{2\pi}j^{j+\frac{1}{2}}e^{-j}$ for large $j$, so that  $\Q_j \sim \P(v=j/p)/p$.  It is then easy to deduce  that $\sum_{\ell \geq j} \Q_j \sim \P(v\geq j/p)$ for large $j$. 
\end{proof}

The above Proposition implies that $\P(\tilde{v}\geq j)$ is such that $\P(\tilde{v}\geq j)\sim \P(v \geq j/p)$ when the number of colors is large. This means that the tail of the  distribution of the random variable ${v}$ can be obtained by rescaling that of the  number $\tilde{v}$ of sampled balls with a given color. When $v$ has a Pareto distribution, Equation~\eqref{estima} can still be used for large $j$ to estimate the shape parameter $a$. The estimation of the probability $1-\E(e^{-p v})$ of sampling a color and the scale parameter $b$ can also be estimated from the tail by using the expression of that probability as a function of $b$ and $a$ as in Equation~\eqref{estimnu}. The same method applies for Weibull distributions.

\subsection{Probabilistic model}

From now on, we consider the probabilistic model and we establish stronger results on the distance between $\P(\tilde{v}\geq j)$ and $\P(v\geq j/p)$, where $\tilde{v}$ is the number of balls with a given color at the end of a trial.  For this sampling mode, it  was not possible to prove a result similar to Corollary~\ref{corasymp}, but Berry-Essen's theorem \cite{Feller} can be used to  establish a stronger result for the comparison between $\tilde{v}$ and $v$. In \cite{Chabchoub:01}, it is specifically proved that if  we define the function $h_j(x)={x^2}/{4p^2}\left(\sqrt{1+{4jp}/{x^2}}-1\right)^2$ for $x\in \R$ and $j>0$, then 
\[
\left|\P\left(\tilde{v} \geq j\right)-
\P\left(v \geq h_j\left(\sqrt{p(1-p)}\cal{G}\right)\vee k\right) \right| \leq c\E\left(\frac{1}{\sqrt{v}}\ind{v\geq j}\right),
\]
where $\cal{G}$ is a standard Gaussian random variable, for real numbers $a \vee b = \max(a,b)$, and 
$c=3(p^2+(1-p)^2)/\sqrt{p(1-p)}$. For small $p$, the constant $c \sim 3/\sqrt{p}$. The above bound is very loose for small $p$ and becomes accurate  only for very large values of $j$. This is why we go further in this paper by establishing  a tighter bound for the ratio ${\P(\tilde{v}\geq j)}/{\P(v\geq j/p)}$.

Let $(B_n)$ be some sequence of i.i.d. Bernoulli random variables with
parameter $p$ and $v$ some independent r.v. on $\N$. Take some
$\alpha\in ]1/2,1[$. Let $\tilde{v}=\sum_{l=1}^v B_l$.

 \begin{theorem}
 \label{distance2}
 For $\alpha\in(1/2,1)$, we have for all $j \geq 1$
 $$
 \frac{\P(\tilde{v}\geq j)}{\P(v\geq j/p)} = A(j) + B(j),
 $$
 where
 $$
 A_1(j) \leq A(j) \leq A_2(j)
 $$
 with 
 \begin{align*}
 A_1(j)  &= \\  
&\left(1-  \exp\left(-\frac{p}{2\left(1+\left(\frac{j}{p}\right)^{\alpha-1}\right)}\left(\frac{j}{p}\right)^{2\alpha -1}\right)\right)  \frac{\P\left(v \geq   j/p+\lfloor(j/p)^\alpha\rfloor +1 \right)}{\P(v\geq j/p)},\\
A_2(j) & =   \frac{\P\left(v \geq   j/p-\lfloor(j/p)^\alpha\rfloor \right)}{\P(v\geq j/p)},
\end{align*}
 and where $B(j)$ is a positive quantity such that
 $$
 B(j) \leq  e^{-\frac{p}{2(1-p)}\left(\frac{j}{p}\right)^{2\alpha-1}}  \frac{\P(v\geq j)}{\P(v \geq j/p)}        .
 $$
 \end{theorem}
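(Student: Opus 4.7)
The plan is to decompose $\P(\tilde{v}\geq j)$ by conditioning on the exact value of $v$. Since $\tilde{v}=\sum_{l=1}^{v}B_{l}\leq v$ almost surely, only the terms $k\geq j$ contribute, so
$$
\P(\tilde{v}\geq j)=\sum_{k\geq j}\P(v=k)\,\P(\mathrm{Bin}(k,p)\geq j).
$$
Writing $m=\lfloor(j/p)^{\alpha}\rfloor$ and fixing the integer threshold $k_{\star}=\lceil j/p\rceil-m$, I split the preceding sum at $k=k_{\star}$, defining
\begin{align*}
A(j)\,\P(v\geq j/p) &= \sum_{k\geq k_{\star}}\P(v=k)\,\P(\mathrm{Bin}(k,p)\geq j),\\
B(j)\,\P(v\geq j/p) &= \sum_{j\leq k<k_{\star}}\P(v=k)\,\P(\mathrm{Bin}(k,p)\geq j).
\end{align*}
The identity $A(j)+B(j)=\P(\tilde{v}\geq j)/\P(v\geq j/p)$ then holds by construction, so it remains to bound each piece. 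The $A$-term collects the typical contribution in which $v$ sits at the scale $j/p$, while $B$ records the atypical upward fluctuations of $\mathrm{Bin}(k,p)$ when $v$ is substantially smaller than $j/p$.

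For the upper bound $A_{2}(j)$ I use $\P(\mathrm{Bin}(k,p)\geq j)\leq 1$, which gives $A(j)\leq\P(v\geq k_{\star})/\P(v\geq j/p)\leq\P(v\geq j/p-m)/\P(v\geq j/p)$. For the lower bound $A_{1}(j)$ I shrink the $A$-sum further to $k\geq j/p+m+1$; this can only decrease it. On this range the binomial mean satisfies $pk\geq j+p(m+1)$, so $\{\mathrm{Bin}(k,p)<j\}$ is a lower-tail deviation of size at least $p(m+1)$. Applying the classical one-sided Chernoff bound $\P(\mathrm{Bin}(k,p)\leq pk-t)\leq\exp(-t^{2}/(2pk))$ with $t=pk-j+1$ and noting that the exponent $t^{2}/(2pk)$ is an increasing function of $k$ on the admissible range, I minimize it at $k=j/p+m+1$. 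A direct computation with $m\approx(j/p)^{\alpha}$ then yields exactly the exponent $\frac{p}{2(1+(j/p)^{\alpha-1})}(j/p)^{2\alpha-1}$ stated in $A_{1}(j)$; factoring this uniform lower bound out of the sum leaves the residual factor $\P(v\geq j/p+m+1)/\P(v\geq j/p)$.

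For the $B$-term, on the range $j\leq k<k_{\star}$ the binomial mean satisfies $pk<j-pm$, so $\{\mathrm{Bin}(k,p)\geq j\}$ is an upper-tail deviation of size at least $pm$. I apply the Bernstein/Bennett-type upper-tail estimate $\P(\mathrm{Bin}(k,p)\geq pk+t)\leq\exp(-t^{2}/(2pk(1-p)))$ with $t=j-pk\geq pm$, and bound $pk(1-p)\leq(j-pm)(1-p)\leq j(1-p)$, which produces the uniform exponent $\frac{p}{2(1-p)}(j/p)^{2\alpha-1}$. Pulling this factor out and using $\sum_{j\leq k<k_{\star}}\P(v=k)\leq\P(v\geq j)$ gives the announced bound on $B(j)$ after dividing by $\P(v\geq j/p)$.

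The main obstacle is producing exactly the constants appearing in the two exponents, which requires the variance-based Chernoff forms rather than the cruder Hoeffding bound $\exp(-2t^{2}/n)$: the lower-tail form $\exp(-t^{2}/(2np))$ is classical, but the upper-tail form $\exp(-t^{2}/(2np(1-p)))$ needs a Bennett/Bernstein-type argument, and one must check that the usual $2t/3$ correction is absorbed in the regime $t=j-pk\ll pk$, which holds precisely because $\alpha<1$ (this is also where the assumption $\alpha>1/2$ enters, via the need for the two exponents $(j/p)^{2\alpha-1}$ to diverge with $j$). One also has to be careful with the floor in $m=\lfloor(j/p)^{\alpha}\rfloor$, with the integer rounding in $k_{\star}$, and with the convention that the $B$-sum is empty whenever $k_{\star}\leq j$.
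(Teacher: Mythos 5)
Your proposal follows essentially the same strategy as the paper: the same decomposition of $\P(\tilde v\geq j)$ at the threshold $k_\star \approx j/p-\lfloor (j/p)^\alpha\rfloor$ (your $B$ is the paper's $T_1$, your $A$ is $T_2$), the trivial bound $\P(\mathrm{Bin}(k,p)\geq j)\leq 1$ for $A_2$, the same shrinking of the $A$-sum to $k\geq j/p+m+1$ followed by the lower-tail Chernoff bound $\exp(-(pk-j)^2/(2pk))$ for $A_1$, and an upper-tail Chernoff-type bound for $B$.

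The only genuine point of departure is the choice of upper-tail bound for the $B$/$T_1$ term. The paper invokes a specific bound of Siegel, $\P(\sum_1^n B_\ell - np\geq nx)\leq e^{-nx^2/A(x)}$ with $A(x)=2p(1-p)+\tfrac23 x(1-2p)-\tfrac29 x^2$, and then proves a small monotonicity lemma ($v\mapsto vA(j/v-p)$ increasing on $[j,j/p]$, hence $\leq 2j(1-p)$). You replace this with Bernstein's inequality plus the coarse bound on the denominator. Both routes produce the identical exponent $\tfrac{p}{2(1-p)}(j/p)^{2\alpha-1}$, so the substitution is legitimate and arguably more standard.

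Two imprecisions in your writeup, neither of which is fatal. First, the inequality you state, $\P(\mathrm{Bin}(k,p)\geq pk+t)\leq \exp(-t^2/(2pk(1-p)))$, is not a correct form of Bernstein; the correct one-sided Bernstein bound is $\exp\bigl(-t^2/\bigl(2(kp(1-p)+(1-p)t/3)\bigr)\bigr)$. Second, your remark that the $t/3$ correction "is absorbed in the regime $t\ll pk$, which holds because $\alpha<1$" is misleading: on the $B$-range the ratio $t/(pk)=(j-pk)/(pk)$ is of order $(1-p)/p$ near $k=j$, so it is certainly not small for small $p$. Fortunately neither of these matters: the correct Bernstein denominator satisfies $kp(1-p)+(1-p)t/3\leq (kp+t)(1-p)=j(1-p)$ unconditionally (no smallness of $t/pk$ needed), which yields exactly the bound $\exp(-(j-pk)^2/(2j(1-p)))$ that the paper also obtains. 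So the conclusion stands; you should just state Bernstein correctly and drop the $t\ll pk$ justification.
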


\begin{proof}
We have
$$
\P(\tilde{v}\geq j)=\P\left( \sum_{\ell=1}^v B_\ell \geq j\right) =  T_1+T_2,
$$
where
 \begin{eqnarray*}
 T_1 & =&  \P\left( \sum_{\ell = 1}^v B_\ell \geq j, j\leq v \leq j/p-\lfloor (j/p)^\alpha\rfloor-1     \right) ,\\
 T_2 &=& \P\left( \sum_{\ell = 1}^v B_\ell \geq j,    j/p-\lfloor (j/p)^\alpha\rfloor \leq v  \right) .
 \end{eqnarray*} 

 Let us first recall the following inequality for the sum of independent Bernoulli random variables $B_\ell$, $\ell \geq 1$ \cite{Siegel}: for $x \in [0,1-p]$
 \begin{equation}
 \P\left(\sum_{\ell =1}^n B_\ell -np \geq nx\right) \leq e^{-\frac{nx^2}{A(x)}},
 \end{equation}
 where 
 \begin{equation}
 A(x) = 2p(1-p)+\frac{2}{3}x(1-2p)-\frac{2}{9}x^2.
 \end{equation}
 It follows that for $j \leq v \leq j/p$
 $$
 \P\left(\sum_{\ell = 1}^v B_\ell \geq j\right) \leq e^{-\frac{(j-pv)^2}{vA\left(\frac{j}{v}-p\right)}}.
 $$
 It is easily checked that the function $v \to vA\left(\frac{j}{v}-p\right)$ is increasing in the interval $[j,j/p]$ and that for all $v \in [j,j/p]$ 
$$
vA\left(\frac{j}{v}-p\right) \leq 2j(1-p).
$$
 Hence, for $v \in [j,j/p]$
 $$
 \P\left(\sum_{\ell = 1}^v B_\ell \geq j\right) \leq e^{-\frac{(j-pv)^2}{2j(1-p)}}
 $$
 and for $v \in [j,j/p-\lfloor(j/p)^\alpha\rfloor -1]$
 $$
 \P\left(\sum_{\ell = 1}^v B_\ell \geq j\right) \leq e^{-\frac{p}{2(1-p)}\left(\frac{j}{p}\right)^{2\alpha-1}}.
 $$
 This implies that
 \begin{eqnarray*}
 T_1 &\leq&   \P\left(\sum_{\ell =1}^v B_\ell \geq j,  j \leq v \leq  j/p-\lfloor(j/p)^\alpha\rfloor -1 \right) \\ 
 &\leq &\P\left(\sum_{\ell =1}^{ j/p-\lfloor(j/p)^\alpha\rfloor -1    } B_\ell \geq j  \right)\P(v \geq j)\\
 &=&  e^{-\frac{p}{2(1-p)}\left(\frac{j}{p}\right)^{2\alpha-1}}\P(v\geq j).
 \end{eqnarray*}

 For the term $T_2$, we first note that
 $$
 T_2 \leq \P\left(v \geq   j/p-\lfloor(j/p)^\alpha\rfloor \right).
 $$
 Then, we clearly have
$$
T_2 \geq \P\left( \sum_{\ell = 1}^v B_\ell \geq j,    j/p+\lfloor (j/p)^\alpha\rfloor +1 \leq v  \right)
$$
and then 
 $$
 \frac{T_2}{\P(v \geq j/p)} \geq \P\left(\sum_{\ell =1}^{j/p  +  \lfloor(j/p)^\alpha\rfloor+1} B_\ell > j\right) \frac{\P(v\geq j/p+\lfloor (j/p)^\alpha\rfloor +1)}{\P(v \geq j/p)}.
 $$
 Chernoff bound implies for $v = j/p +  \lfloor(j/p)^\alpha\rfloor+1     $
 \begin{eqnarray*}
 \P\left(\sum_{\ell = 1}^{v}  B_\ell \leq j\right)  &\leq&\exp\left(-\frac{(pv-j)^2}{2pv}\right)  \\ 
&\leq&   \exp\left(-\frac{p}{2\left(1+\left(\frac{j}{p}\right)^{\alpha-1}\right)}\left(\frac{j}{p}\right)^{2\alpha -1}\right).
 \end{eqnarray*}
 It follows that
\begin{multline*}
 \frac{T_2}{\P(v \geq j/p)} \geq  \\ \left(1-  \exp\left(-\frac{p}{2\left(1+\left(\frac{j}{p}\right)^{\alpha-1}\right)}\left(\frac{j}{p}\right)^{2\alpha -1}\right)\right)  \frac{\P(v\geq j/p+\lfloor(j/p)^\alpha\rfloor+1)}{\P(v \geq j/p)}  .
 \end{multline*}
 and the proof follows.
 \end{proof}

The above result  can be applied to specific  distributions for $v$, namely Pareto and Weibull distributions, in order to show that the tails of the probability distribution functions of $\tilde{v}$ and $p v$ are the same. This is the analog of Proposition~\ref{taildist} for the probabilistic model.

\begin{corollary}
If $v$ has either

\noindent (1) a Pareto tail distribution with parameter $a>1$ such that for $x\geq
    0$, $\P(v\geq x)=L(x)x^{-a}$ where
    $L$ is a slowly varying function, i.e.,  for each $t>0$,
    \[
\lim_{x\rightarrow +\infty} \frac{L(tx)}{L(x)}=1;
\]
\noindent or

\noindent (2) a Weibull tail distribution with $\beta \in ]0,1/2[$ such that  for $x\geq
    0$, $\P(v\geq x)=L(x) e^{-\delta x^{\beta}}$ for some $\delta>0$
    and $L$ a slowly varying function 

then
\[
\lim_{j\rightarrow +\infty} \left| \frac{\P(\tilde{v}\geq j)}{\P(v\geq
  j/p)}-1\right|=0.
\]
\end{corollary}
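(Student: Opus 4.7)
My plan is to apply Theorem~\ref{distance2} with an exponent $\alpha\in(1/2,1)$, chosen differently for the two families of distributions. That theorem writes
\[
\frac{\P(\tilde{v}\geq j)}{\P(v\geq j/p)} = A(j) + B(j),
\]
with $A_1(j)\leq A(j)\leq A_2(j)$ and $B(j)\leq e^{-\gamma (j/p)^{2\alpha-1}}\,\P(v\geq j)/\P(v\geq j/p)$, where $\gamma=p/(2(1-p))$. It is therefore enough to show that $A_1(j),A_2(j)\to 1$ and $B(j)\to 0$ as $j\to\infty$.

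First I would handle the tail ratios appearing in $A_1$ and $A_2$. Setting $s=\lfloor(j/p)^\alpha\rfloor$, since $\alpha<1$ one has $s/(j/p)=(j/p)^{\alpha-1}\to 0$. In the Pareto case, the uniform-convergence theorem for slowly varying functions gives $L(j/p\pm s)/L(j/p)\to 1$, while the power factor contributes $(1\pm(j/p)^{\alpha-1})^{-a}\to 1$. In the Weibull case, the slowly-varying part is handled identically; for the stretched exponential I would Taylor-expand
\[
(j/p\pm s)^\beta-(j/p)^\beta = \pm\beta(j/p)^{\alpha+\beta-1}+O((j/p)^{2\alpha+\beta-2}),
\]
which tends to $0$ provided $\alpha+\beta<1$; since $\beta<1/2$, the interval $(1/2,1-\beta)$ is nonempty and $\alpha$ can be chosen from it. Finally, the prefactor of $A_1$, of the form $1-\exp(-\gamma(j/p)^{2\alpha-1}/(1+(j/p)^{\alpha-1}))$, tends to $1$ because $2\alpha-1>0$.

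Next I would show $B(j)\to 0$. In the Pareto case, $\P(v\geq j)/\P(v\geq j/p)=(L(j)/L(j/p))\,p^{a}(1+o(1))$ stays bounded, so the exponential factor $e^{-\gamma(j/p)^{2\alpha-1}}$ kills $B(j)$ for any $\alpha>1/2$. In the Weibull case, the ratio $\P(v\geq j)/\P(v\geq j/p)$ behaves like $(L(j)/L(j/p))\,e^{\delta(p^{-\beta}-1)j^\beta}$, so the decay in $B(j)$ must beat this growth, which forces $2\alpha-1>\beta$, i.e.\ $\alpha>(1+\beta)/2$.

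The hard part is the Weibull case: the two constraints on $\alpha$ combine into $(1+\beta)/2<\alpha<1-\beta$, an interval which is nonempty only when $\beta<1/3$. To reach the full range $\beta\in(0,1/2)$, one has to sharpen the bound on $B(j)$: instead of factorising the term $T_1$ in the proof of Theorem~\ref{distance2} as $\P(\sum_{\ell=1}^{j/p-s}B_\ell\geq j)\,\P(v\geq j)$, which overestimates the tail of $v$, I would write $T_1=\sum_v\P(v=v)\,\P(\sum_{\ell=1}^v B_\ell\geq j)$ and apply Laplace's method. With the Chernoff bound $\P(\sum_{\ell=1}^vB_\ell\geq j)\leq\exp(-(j-pv)^2/(2j(1-p)))$ and the Weibull density, the exponent $-\delta v^\beta-(j-pv)^2/(2j(1-p))$ is maximised at $v\approx j/p-c\,j^\beta$, and the resulting bound scales like $\P(v\geq j/p)\,e^{-c'j^{2\alpha-1}}$ up to polynomial factors; this removes the parasitic $e^{\delta(p^{-\beta}-1)j^\beta}$ blow-up and makes $\beta<1/2$ accessible.
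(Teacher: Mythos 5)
Your proposal follows the same decomposition as the paper (through Theorem~\ref{distance2}), and your treatment of $A_1(j)$ and $A_2(j)$ matches the paper's: for the Pareto case the tail ratio tends to $1$ for any $\alpha<1$, and for the Weibull case the Taylor expansion of $(j/p\pm s)^\beta$ forces $\alpha+\beta<1$.

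Where you genuinely diverge is in the analysis of $B(j)$ for the Weibull tail, and here you have caught a real error in the paper. The paper asserts that
\[
\frac{\P(v\geq j)}{\P(v\geq j/p)}=\frac{L(j)}{L(j/p)}\,e^{-\delta j^\beta(1-p^{-\beta})}\longrightarrow 0,
\]
and from that concludes $B(j)\to 0$ with no constraint on $\alpha$ beyond $\alpha+\beta<1$, so that $\beta<1/2$ suffices. But since $p<1$ one has $1-p^{-\beta}<0$, so the exponent $-\delta j^\beta(1-p^{-\beta})$ is \emph{positive} and the ratio diverges like $e^{\delta(p^{-\beta}-1)j^\beta}$, exactly as you wrote. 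Your observation that the bound of Theorem~\ref{distance2} then requires $2\alpha-1>\beta$ as well as $\alpha+\beta<1$, and that this pins $\beta<1/3$ rather than $\beta<1/2$, is correct: as stated, the paper's proof of part~(2) does not establish the corollary on the full claimed range.

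Your proposed repair --- replacing the crude factorisation $T_1\leq \P(\sum_{\ell\leq j/p-s}B_\ell\geq j)\,\P(v\geq j)$ by the sum $\sum_n\P(v=n)\,\P(\sum_{\ell\leq n}B_\ell\geq j)$ and a Laplace-type maximisation of the exponent $-\delta n^\beta-(j-pn)^2/(2j(1-p))$ --- is the right idea, and after the substitution $n=j/p-u$, the constrained maximum over $u\geq \lfloor(j/p)^\alpha\rfloor$ does give a bound of order $\P(v\geq j/p)\,e^{-c(j/p)^{2\alpha-1}}$ up to lower-order factors, which only needs $\alpha>\beta$ (automatic since $\alpha>1/2>\beta$) rather than $2\alpha-1>\beta$, recovering the full $\beta<1/2$. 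To make this rigorous you would still need to justify two points you only sketch: controlling $\P(v=n)$ in terms of $\P(v\geq n)$ (or working with the distribution function and summing by parts, since only the tail $\P(v\geq x)=L(x)e^{-\delta x^\beta}$ is assumed, not a density), and bounding the number of terms and the polynomial prefactors coming out of the Laplace estimate so that they are absorbed by the exponential decay. With those details supplied your argument is a genuine correction of the paper's proof, not merely an alternative to it.
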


\begin{proof}
For (1),
\[
\frac{\P(v\geq j)}{\P(v\geq
  j/p)}=\frac{L(j)}{L(j/p)} \frac{j^{-a}}{(j/p)^{-a}}=
  \frac{L(j)}{L(j/p)}p^a \cvj p^{-a}
\]
and
\[
\frac{\P(v\geq j/p+\epsilon (j/p)^{\alpha})}{\P(v\geq j/p)}=\frac{L(( j/p)(1+\epsilon (j/p)^{\alpha-1}))}{L(j/p)}(1+\epsilon(j/p)^{\alpha-1})^{-a}
\]
which tends to 1 when $j$ tends to $+\infty$. This implies that the quantities $A_1(j)$ and $A_2(j)$ appearing in Theorem~\ref{distance2} tends to 1 and $B(j)$ tends to 0 when $j \to \infty$. 

For (2),
\[
\frac{\P(v\geq j)}{\P(v\geq
  j/p)}=\frac{L(j)}{L(j/p)} e^{-\delta j^{\beta}(1-p^{-\beta})}
  \cvj 0
\]
and it is straightforward that
\begin{align*}
\frac{\P(v\geq j/p+\epsilon (j/p)^{\alpha})}{\P(v\geq j/p)}&=\frac{L(
    j/p(1+\epsilon (j/p)^{\alpha-1)})}{L(
    j/p)}e^{-\delta(j/p+\epsilon (j/p)^{\alpha})^{\beta}+\delta (j/p)^{\beta}}\\
&=\frac{L(
    j/p(1+\epsilon (j/p)^{\alpha-1)})}{L(
    j/p)}e^{-\delta \beta \epsilon (j/p)^{\alpha+\beta-1}(1+o(1))}
\end{align*}
which tends to $1$ if $\alpha+\beta<1$. Let $\beta \in ]0,1[$. It is
  sufficient to find $\alpha \in ]1/2,1[$ such that
  $\alpha+\beta<1$. Necessarily $1-\beta>\alpha>1/2$ thus $\beta<1/2$
  and for such a $\beta$,  such an $\alpha$ exists. 
\end{proof}

\section{Concluding remarks on sampling and parameter inference}
\label{conclusion}
We have established in this paper convergence results for the distribution of the number
of balls with a given color under the assumption that there is a large number of colors in
the urn, that the number of balls with a given color has a heavy tailed distribution
independent of the color, and that only a small fraction $p$ of the total number of ball is
sampled. We have considered two ball sampling  rules. The first one states that the
probability of drawing a ball with a given color depends upon the relative contribution of
the  color to the total number of balls and that a drawn ball is immediately replaced into
the urn. With the second rule, each ball is selected with probability $p$ independently of
the others. The two rules do not give the same results, even if they coincide when $p\to
0$ (see \cite{Chabchoub:01} for details). 

From a practical point of view, we have shown that it is possible to identify the original
distribution of the number of balls with a given color by using the tail of the 
distribution of the number of balls with a a given color drawn from the urn. A stronger
result holds for Pareto  when the number of colors is
very large (see Proposition~\ref{propalpareto}). This result is robust in practice because
it does not rely on the asymptotics of the tail distribution (in
Proposition~\ref{propalpareto} assertions hold for all $j >a$). 

The  determination of the original number of balls per color is valid when the number of
balls follows a unique distribution of Pareto or Weibull type. This could be used in the
context of packet sampling in the Internet. In practice, however, the number of packets in
flows is in general not described by a unique ``nice'' distribution, but can only be
locally approximated by a series of Pareto distributions (see \cite{Info08} for a
discussion). More sophisticated techniques are then necessary to get the original
statistics of flows.

\providecommand{\bysame}{\leavevmode\hbox to3em{\hrulefill}\thinspace}
\providecommand{\MR}{\relax\ifhmode\unskip\space\fi MR }
\providecommand{\MRhref}[2]{%
  \href{http://www.ams.org/mathscinet-getitem?mr=#1}{#2}
}
\providecommand{\href}[2]{#2}

\end{document}